\newtheorem{theorem}{Theorem}\newtheorem{lemma}{Lemma}
\newtheorem{definition}{Definition}\newtheorem{proposition}{Proposition}
\newtheorem{corollary}{Corollary}\newtheorem{claim}{Claim}
\def\squarebox#1{\hbox to #1{\hfill\vbox to #1{\vfill}}}
\def\qed{\hspace*{\fill}        \vbox{\hrule\hbox{\vrule\squarebox{.667em}\vrule}\hrule}\smallskip}
\newenvironment{proof}{\begin{trivlist}
  \item[\hspace{\labelsep}{\em\noindent Proof.~}]  }{\qed\end{trivlist}}
 \newcommand{\bs}{\bigskip} 
 \newcommand{\n}{\noindent} 
 \newcommand{\hs}[1]{\hspace*{ #1 mm}} 
\newenvironment{proofof}[1]{\vspace*{5mm} \par \noindent
         {\bf Proof of #1.\hs{2}}}{\hfill$\Box$ \vspace*{3mm}}
\newcommand{\ignore}[1]{}
 \newcommand{\bs}{\bigskip}  \newcommand{\n}{\noindent}  \newcommand{\hs}[1]{\hspace*{ #1 mm}} 
 \def\bbox{\vrule height6pt width6pt depth1pt}\theoremstyle{plain}\theoremheaderfont{\bfseries}\setlength{\theorempreskipamount}{3mm}\setlength{\theorempostskipamount}{3mm}
 \newtheorem{theorem}{Theorem}[section] \newtheorem{lemma}[theorem]{Lemma}       \newenvironment{proof}{\par \noindent            {\bf Proof. \hs{2}}}{\hfill$\Box$ \vspace*{3mm}} 
\begin{document}
%%%%%%%%%%%%%%%%%%
\pagestyle{plain}
\begin{center}
{\Large {\bf Quantum Counterfeit Coin Problems}}
%\footnote{} 
\bs\\

{\sc Kazuo Iwama}$^1$\footnote{Supported in part by KAKENHI (19200001,22240001)} \hspace{5mm}
{\sc Harumichi Nishimura}$^2$\footnote{Supported in part by KAKENHI (21244007,22700014)} \hspace{5mm} 
{\sc Rudy Raymond}$^3$ \hspace{5mm} 
{\sc Junichi Teruyama}$^1$ 

\

{\small
$^1${School of Informatics, Kyoto University, Japan}; {\tt $\{{\tt iwama,teruyama}\}$@kuis.kyoto-u.ac.jp} 

$^2${School of Science, Osaka Prefecture University, Japan}; {\tt hnishimura@mi.s.osakafu-u.ac.jp}

$^3${IBM Research -- Tokyo, Japan}; {\tt raymond@jp.ibm.com} 
}

\end{center}
\bs

\n{\bf Abstract.}\hs{1} 
The counterfeit coin problem requires us to find all false coins from
a given bunch of coins using a balance scale.  We assume that the
balance scale gives us only ``balanced'' or ``tilted'' information and 
that we know the number $k$ of false coins in advance.  The balance
scale can be modeled by a certain type of oracle and its query
complexity is a measure for the cost of weighing algorithms (the
number of weighings).  In this paper, we study the quantum query
complexity for this problem.  Let $Q(k,N)$ be the quantum query
complexity of finding all $k$ false coins from the $N$ given coins.
We show that for any $k$ and $N$ such that~$k < N/2$, $Q(k,N)=O(k^{1/4})$, 
contrasting with the classical query complexity, $\Omega(k\log(N/k))$, that depends on 
$N$. So our quantum algorithm achieves a {\it quartic} speed-up for this problem. 
We do not have a matching lower bound, but we show some evidence that the upper bound 
is tight: any algorithm, including our algorithm, that satisfies certain properties 
needs $\Omega(k^{1/4})$ queries. 

\section{Introduction} 

{\em Exponential} speed-ups by quantum algorithms have been highly
celebrated, but their specific examples are not too many.  
In contrast, almost every unstructured search problem can be sped up 
simply by using amplitude amplification~\cite{Gro96,BBHT98,BHMT02}, 
providing a huge number of combinatorial problems 
for which quantum algorithms are {\it quadratically} faster than classical ones. 
Interestingly there are few examples in between. 
(For instance, \cite{DS08} provides a cubic speed-up while 
their classical lower bound is not known.) The reason is probably that the
amplitude amplification is too general to combine with other
methods appropriately. In fact we know few such cases including the
one by \cite{FSS07} where they improved a simple Grover search
algorithm for triangle finding by using clever combinatorial ideas
(but unfortunately still less than quadratically compared to the best
classical algorithm). This paper achieves a {\it quartic}
speed-up for a well-known combinatorial problem.

{\em The counterfeit coin problem} is a mathematical puzzle whose origin
dates back to 1945; in the American Mathematical Monthly,~52,~p.~46,
E.~Schell posed the following question which is probably one of the
oldest questions about the complexity of algorithms: ``You have eight
similar coins and a beam balance.  At most one coin is counterfeit and
hence underweight. How can you detect whether there is an underweight
coin, and if so, which one, using the balance only twice?''  The puzzle
immediately fascinated many people and since then there have been
several different versions and extensions in the literature
(see e.g.,~\cite{Man77,GN95,HH95,LZN05}).

This paper considers the quantum version of this problem, which, a bit
surprisingly, has not appeared in the literature. To make our model
simple, we assume that we cannot obtain information on which side is heavier when
the scale is tilted. So, the balance scale gives us only binary information,
{\em balanced} (i.e., two sets of coins on the two pans are equal in
weight) or {\em tilted} (different in weight). Our goal is to detect the false coin with a
minimum number of weighings.  The problem is naturally extended to the
case that there are two or more ($=k$ that is known in advance) false
coins with equal weight.  For the simplest case that $k=1$, the
following easy (classic) algorithm exists: We put (approximately)
$N/4$ coins on both pans. If the scale is tilted, then we know the false coin
is in those $N/2$ coins and if it is balanced, then the false one should
be in the remaining $N/2$ ones. 
Also, it is easy to see that two weighings are enough for $N=4$. 
Thus $\lceil \log N \rceil$ weighings
are enough for $k=1$ and this is also an information theoretic lower bound. 
(The original version of the problem assumes ternary outputs from the balance,
left-heavy, right-heavy and balanced, and that the false coin is always underweight.
As one can see easily, however, the same idea allows us to obtain 
the tight upper bound of $\lceil \log_3 N \rceil$.)

Our model of a balance scale is a so-called {\em oracle}. 
{\em A balance oracle} or simply {\em a B-oracle} is an $N$-bit register, which
includes (originally unknown) $N$ bits, $x_1x_2\cdots x_N \in
\{0,1\}^N$.  In order to retrieve these values, we can make {\em a
query} with {\em a query string} $q_1q_2 \cdots q_N \in \{0, 1,
-1\}^N$ including the same number ($=l$) of $1$'s and $-1$'s.  
Then the oracle returns a one-bit answer $\chi$ defined as: \vspace{-2mm}

$$
\chi=0 \mbox{ if } x_1q_1 + \cdots + x_Nq_N =0 \mbox
{ and }\chi=1 \mbox{ otherwise.}
$$\vspace{-4mm}

\noindent
Consider $x_1, \cdots, x_N$ as $N$ coins where $0$ means a
fair coin and $1$ a false one.  Then, $q_i=1$ means we place coin $x_i$ on
the left pan and $q_i=-1$ on the right pan.  Since we must have the same
number of $1$'s and $-1$'s, the answer $\chi$ correctly simulates the
balance scale, i.e., $\chi=0$ means it is balanced and $\chi=1$ tilted. 
The number of weighings needed to retrieve $x_1$ through $x_N$ (or to
identify all the false coins) is called {\em query complexity}.

The main purpose of this paper is to obtain {\em quantum} query complexity 
for the counterfeit coin problems.  Observe that if we know in advance that an
even-cardinality set ${X}$ includes {\em at most one} false coin, then by
using the balance for any equal-size partition of ${X}$ we can get the
parity of ${X}$, i.e., the parity of the number (zero or one, now) of false coins in ${X}$. 
This means that for strings including at most one 1, the B-oracle is
equivalent to the so-called IP oracle \cite{BV97}. Therefore, 
by Bernstein-Vazirani algorithm \cite{BV97}, we need only one weighing to
detect the false coin. (Note that this observation was essentially done by Terhal and Smolin \cite{TS98}.) 
This already allows us to design the following 
quantum algorithm for general~$k$: Recall that we know $k$ in advance. 
So, if we sample $N/k$ coins at random, then they include exactly one false
coin with high probability and we can find it using the
B-oracle just once as mentioned above.  Thus, by using the standard
amplitude amplification \cite{BHMT02} (together with a bit careful consideration for
the answer-confirmation procedure), we need $O(k)$ weighings to find all
$k$ false coins. For a small $k$, this is already much better than $\Omega(k\log(N/k))$
that is an information theoretic lower bound for the classical case.

{\bf Our Contribution.} 
This paper shows that this complexity can be
furthermore improved quartically, namely, our new algorithm needs
$O(k^{1/4})$ weighings.  Note that the above idea, the one exploiting
Bernstein-Vazirani, already breaks down for $k=2$, since the scale
tilts even if the pans hold two (even) false coins if they both go to
a same pan.  Moreover, if $k$ grows, say as large as linear in $N$,
the balance will be tilted almost always for randomly selected equal
partitions.  Nevertheless, Bernstein-Vazirani is useful since it
essentially reduces our problem (identifying false coins) to the
problem of deciding the parity of the number of the false coins that
turns out to be an easier task for B-oracles.  By this we can get a
single quadratic speed-up and another quadratic speed-up by amplitude amplification. 

We conjecture that this bound is tight, but unfortunately, 
we cannot prove it at this moment. The main difficulty is that we have a
lot of freedom on ``the size of the pans'' (= the number of coins
placed on the two pans of the scale), which makes it hard to design a
single weight scheme of the adversary method~\cite{Amb02}. However, we do have
a proof claiming that we cannot do better unless we can remove the two
fundamental properties of our algorithm.  These properties are (i) the
big-pan property and (ii) the random-partition property. 
We have considered several possibilities for escaping from them, 
but not successful for even one of them. 

{\bf Related Work.} Query complexities have been studied almost always
for the standard {\em index oracle}, which accepts an index $i$ and returns
the value of $x_i$.  Other than this oracle, we know few ones
including the IP oracle \cite{BV97} mentioned before and the even more powerful
one that returns the number (not the parity) of 1's in the string \cite{TS98}.
Also, \cite{TS98} presented a single-query quantum algorithm 
for the binary search problem under the IP oracle, which is essentially 
based on the same idea as the $k=1$ case of our problem mentioned above.

The quantum adversary method, which is used for B-oracles in this
paper, was first introduced by Ambainis \cite{Amb02} for the
standard oracle.  Many variants have followed including weighted adversary methods
\cite{Amb06,Zha04}, spectral adversary method \cite{BSS03}, Kolmogorov
complexity method \cite{LM08}, all of which were shown to be
equivalent \cite{SS06}.  After H{\o}yer et al.~\cite{HLS07} introduced
a stronger quantum adversary method called the negative adversary method, 
Reichardt \cite{Rei09,Rei10} showed that this method is ``optimal'' for any Boolean function.

{\bf Models.} A {\em B-oracle} is a binary string $x=x_1\cdots x_N$ where $x_i=1$ (resp.~$=0$) 
means that the $i$-th coin is false (resp.~fair). For instance, the string $0001$ for $N=4$ means
that the fourth coin is a unique false coin. A query to the oracle is given as a string 
$q=q_1\cdots q_N\in \{0,1,-1\}^N$ that must be in the set 
$Q^{(B)}=\bigcup_{l=0}^{\lfloor N/2\rfloor}Q_l$ 
where $Q_l$ is the set of strings $q$ such that $q$ has exactly~$l$~$1$'s and $l$~${-1}$'s.  
Here, $1$ (or $-1$, resp.) in the $i$-th component means that we place the $i$-th coin on the left pan
(on the right pan, resp.) and $0$ means that the $i$-th coin is not placed on either pan. 
The answer from the oracle is represented by a binary value $\chi(x;q)$ 
where $\chi(x;q)=0$ means the scale is balanced, that is, $q_1x_1+\ldots+q_Nx_N=0$ 
and $\chi(x;q)=1$ means it is tilted, that is, $q_1x_1+\ldots+q_Nx_N\neq 0$. 
In quantum computation, the B-oracle is viewed as a unitary transformation~$O_{B,x}$.
Namely,~$O_{B,x}$ transforms $|q\rangle$ to $(-1)^{\chi(x;q)}|q\rangle$. 
Throughout this paper, we assume that~$k<N/2$ since our B-oracle model 
is unable to distinguish any $N$-bit string~$x$ from~$\bar{x}$ 
(the bit string obtained by flipping all bits of $x$).  

\section{Upper Bounds}\label{one-fourth}
%\subsection{Upper Bound Result}
Here is our main result in this paper:

\begin{theorem}\label{main_theorem}
The quantum query complexity for finding $k$ false coins among $N$ coins is $O(k^{1/4})$.
\end{theorem}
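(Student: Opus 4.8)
The plan is to build the algorithm on top of the Bernstein--Vazirani observation already recorded above: whenever a set $S$ of coins is guaranteed to contain at most one false coin, a balanced query $q$ supported on $S$ returns exactly the parity $\bigoplus_{i\in S} q'_i x_i$, where $q'_i=|q_i|\in\{0,1\}$, since the lone false coin (if any) contributes $\pm 1$ to $\langle q,x\rangle$ and hence $\chi=q'_j$ at its position $j$. Thus the B-oracle restricted to such an $S$ is precisely an IP oracle, and a single query pins down the false coin of $S$. I will call this primitive $\mathrm{FindOne}(S)$; it is the source of the first quadratic speed-up, because the task ``identify the false coin'' has been replaced by the linear task ``recover a hidden weight-one string,'' which quantumly costs one query irrespective of $|S|$ and, crucially, irrespective of $N$.

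To use this primitive against $k$ false coins I would exploit the random-partition property. Since $k$ is known, subsampling the $N$ coins at rate $\Theta(1/m)$, where $m$ is the number of still-unidentified false coins, produces an active set whose false-coin count concentrates near $1$; with constant probability it is exactly one, and then $\mathrm{FindOne}$ succeeds. Iterating this naively already yields the $O(k)$ bound quoted in the introduction, so the real content is to extract many coins per query rather than one. Here the big-pan property enters: for a uniformly random balanced partition of an active set carrying $m$ false coins, the scale is balanced only when the two pans hold equally many false coins, an event of probability $\Theta(1/\sqrt m)$ by the central-limit concentration of a near-binomial count. This $1/\sqrt m$ — rather than $1/m$ — is the quantitative reason a second square root can appear.

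I would then drive the algorithm by amplitude amplification whose marked event is ``the current big random partition is balanced,'' i.e.\ $\langle q,x\rangle=0$. Because this event has amplitude $\Theta(m^{-1/4})$, amplitude amplification reaches it in $O(m^{1/4})$ queries; and a balanced configuration is exactly one on which the B-oracle acts as an honest IP/phase oracle, so a Fourier-sampling (Bernstein--Vazirani) read-off is designed to recover a constant fraction of the $m$ surviving false coins from the amplified state. Running this in $O(\log k)$ phases with $m$ running through $k, k/2, k/4,\dots$ costs $\sum_i O\bigl((k/2^i)^{1/4}\bigr)=O(k^{1/4})$ by the geometric sum, and since $k$ is known in advance the procedure halts the moment the tally of identified coins reaches $k$, which doubles as the answer-confirmation step; standard boosting drives the overall failure probability down.

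The step I expect to be the main obstacle is precisely this composition: making amplitude amplification and the IP/Bernstein--Vazirani read-off cooperate on the same register while controlling the (one-sided) errors from subsamples that happen to carry the wrong number of false coins, and doing so uniformly over the wide range of admissible pan sizes $l$. This freedom in choosing $l$ — the very freedom the introduction flags as the barrier to a matching lower bound — forces the balance probability, and hence the amplification schedule, to be re-certified in each phase; and the bookkeeping that guarantees each phase eliminates a constant fraction of the survivors, so that the geometric sum rather than a leaf-dominated $O(k)$ recursion controls the cost, is the delicate heart of the argument.
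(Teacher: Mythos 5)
You have the two correct quantitative ingredients --- the balance probability $\Theta(1/\sqrt m)$ for a uniformly random equal partition of a set containing $m$ false coins, and the resulting $O(m^{1/4})$ cost of amplitude amplification on the event ``the scale balances'' --- and you correctly identify that a balanced query is exactly the situation in which the B-oracle behaves like an IP (parity) oracle. But the composition you propose does not close, and the place where it fails is precisely the step you flag as ``the delicate heart.'' Your plan is sequential: amplify until you land on a balanced partition of one particular active set, then do a Bernstein--Vazirani read-off to ``recover a constant fraction of the $m$ surviving false coins,'' then recurse with $m\mapsto m/2$. A single balanced configuration, however, carries only one bit of information (the parity of the false coins in that one set is even); Fourier sampling recovers $x$ only if you hold a \emph{superposition over all query sets} $\widetilde q$, each branch carrying the phase $(-1)^{\widetilde q\cdot x}$. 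Nothing in your construction produces that state, and no justification is given for why any constant fraction of the false coins would be identified per phase; consequently the geometric-sum accounting over $O(\log k)$ phases has nothing to rest on.

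The paper's proof resolves this by inverting the order of quantifiers: it first prepares the uniform superposition $\sum_{\widetilde q\in Q_{even}}|\widetilde q\rangle$ over all even-weight query strings (via a modified Bernstein--Vazirani transform $W$ that handles the parity restriction and the $x\leftrightarrow\bar x$ ambiguity), and then runs the partition-superposition and the amplitude amplification \emph{coherently, controlled on} $\widetilde q$, in an ancilla register. The point of the amplification is not to produce a balanced partition to measure, but to make the predicate ``a balanced partition was found'' reliable enough that a conditional phase flip on its failure imprints $(-1)^{\widetilde q\cdot x}$ on every branch simultaneously (exactly when the parity is odd, all partitions tilt and the phase always flips; when it is even, amplification ensures the flip almost never happens). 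After uncomputing the ancilla, a single application of $W^{-1}$ reads off \emph{all} of $x$ at once, with the accumulated error bounded by a small constant (the $|err\rangle$ terms). There is no peeling recursion, no per-phase subsampling at rate $1/m$, and no fraction-elimination bookkeeping; the $O(k^{1/4})$ cost is incurred once, inside the single simulated IP query, and exact amplitude amplification plus an $O(\log k)$-query classical verifier then makes the algorithm exact. To repair your proposal you would need to replace the ``measure a balanced partition, then read off'' step with this coherent phase-kickback construction; as written, the read-off step is a genuine gap.
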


Notice that our algorithm is {\em exact}, i.e., its output must be correct with probability one to compare our result 
with the classical case (which has been often studied in the exact setting). Since 
we use exact amplitude amplification \cite{BHMT02} to make our algorithm exact, 
the assumption that $k$ is known is necessary. But it should be noted 
that our bounded-error algorithm described in this section works even 
for unknown $k$. Also, we note that our algorithm can be easily adapted 
so that it works when the output of the balance is ternary 
(while we assume it is binary for simplicity).

Before the proof, we first describe our basic approach, a simulation of the IP oracle by the B-oracle. 
Recall that the IP oracle (Inner Product oracle) \cite{BV97} transforms a prequery state
$|\widetilde{q}\rangle_{{\sf R}}$ to $(-1)^{\widetilde{q}\cdot x}|\widetilde{q}\rangle_{{\sf R}}$, 
where $\widetilde{q}\in\{0,1\}^N$ in register~${\sf R}$ is a query string 
and $x\in\{0,1\}^N$ is an oracle. 
Then the Bernstein-Vazirani algorithm (the Hadamard transform) retrieves
the string $x$ and we know the $k$ false coins in the case of our problem.
Observe that the IP oracle flips the phase of each state
if and only if~$\widetilde{q}\cdot x$ is odd, in other words,
if and only if a multiset $M(\widetilde{q},x):=\{x_i \mid \widetilde{q}_i = 1\}$
includes an odd number of $1$'s (or false coins in our case).
If $k=1$, then $M(\widetilde{q},x)$ includes at most one $1$.
Hence we can simply replace the IP oracle with the query sequence $\widetilde{q}$ 
by the B-oracle with a query sequence~$q$
such that an arbitrarily one half (the first one half, for instance) of the $1$'s in $\widetilde{q}$
are changed to  $-1$'s, meaning the one half of the coins in $M(\widetilde{q},x)$ go to the left pan
and the remaining one half to the right pan. (As shown in a moment, we can assume without loss of generality 
that $\widetilde{q}$ includes an even number of $1$'s.)

Now we consider the general ($k \geq 1$) case. If $M(\widetilde{q},x)$ includes odd $1$'s, 
then the scale is tilted for any such $q$ mentioned above; this is desirable for us.
If $M(\widetilde{q},x)$ includes even $1$'s, we wish the scale to be balanced.
In order for this to happen, however, we must divide the (unknown) false coins in $M(\widetilde{q},x)$
into the two pans evenly, for which there are no obvious ways other than using randomization.
Our idea is to introduce the second register, ${\sf R}'$, as follows:
On ${\sf R}'$, we prepare, with being entangled to each state $\widetilde{q}$ in ${\sf R}$,
a superposition of all possible states $q_1(\widetilde{q}), q_2(\widetilde{q}), \ldots, q_h(\widetilde{q})$,
obtained by flipping one half of $1$'s in $\widetilde{q}$ into ${-1}$'s.
By using this superposition as a query to the B-oracle,
we can achieve a success (being able to detect the scale is balanced) probability of~$1/\sqrt{m}$,
where $m$ is the number of false coins in $M(\widetilde{q},x)$.
In order to increase this probability, we can use copies of register ${\sf R}'$
or, more efficiently, quantum amplitude amplification~\cite{BHMT02}.

As suggested before, we begin with the restriction of the IP oracle without losing its power.
The {\em parity-restricted query} means that the Hamming weights of
all superposed queries $\widetilde{q}$, denoted by $wt(\widetilde{q})$, are even.

\begin{lemma}\label{parity-restricted}
Let $S_{<N/2}:=\{x\in\{0,1\}^N\mid wt(x)< N/2\}$. Then there is a quantum algorithm to identify
an oracle in $S_{<N/2}$ by a single parity-restricted query for the IP oracle.
\end{lemma}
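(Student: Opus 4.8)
The plan is to run the Bernstein--Vazirani algorithm, but to feed it the uniform superposition over \emph{even}-weight strings in place of the usual uniform superposition over all of $\{0,1\}^N$, and then to resolve the resulting ambiguity using the promise $wt(x)<N/2$. First I would prepare, \emph{without any query}, the state
\[
|\psi\rangle=\frac{1}{\sqrt{2^{N-1}}}\sum_{\substack{\widetilde q\in\{0,1\}^N\\ wt(\widetilde q)\text{ even}}}|\widetilde q\rangle ,
\]
for instance by applying $H$ to the first $N-1$ qubits of $|0^N\rangle$ and then copying their parity into the last qubit with CNOT gates. Every computational-basis string appearing in $|\psi\rangle$ has even Hamming weight, so a single application of the IP oracle to $|\psi\rangle$ is exactly a parity-restricted query.

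Next I would apply the IP oracle once and then the Hadamard transform $H^{\otimes N}$. Writing the even-weight indicator as $\tfrac12\bigl(1+(-1)^{\widetilde q\cdot 1^N}\bigr)$, where $1^N$ denotes the all-ones string so that $\widetilde q\cdot 1^N\equiv wt(\widetilde q)\pmod 2$, the post-query state splits into a part carrying phase $(-1)^{\widetilde q\cdot x}$ and a part carrying phase $(-1)^{\widetilde q\cdot(x\oplus 1^N)}$. Applying $H^{\otimes N}\sum_{\widetilde q}(-1)^{\widetilde q\cdot y}|\widetilde q\rangle=2^{N/2}|y\rangle$ to each part, a short computation (and the fact that $x\oplus 1^N=\bar x$) gives that the final state is
\[
\frac{1}{\sqrt2}\bigl(|x\rangle+|\bar x\rangle\bigr).
\]

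Finally I would measure in the computational basis, obtaining an outcome $y$ that is $x$ or $\bar x$ with equal probability. Because $wt(x)<N/2$ while $wt(\bar x)=N-wt(x)>N/2$, the two candidates lie on opposite sides of $N/2$; hence the algorithm outputs $y$ when $wt(y)<N/2$ and $\bar y$ otherwise, recovering $x$ with certainty. The one subtle point — and the reason the parity restriction is not obviously free — is that discarding the odd-weight strings amounts to discarding the Fourier mode indexed by $1^N$, which is precisely the mode that would distinguish $x$ from $\bar x$; this is why the query alone leaves exactly one bit of ambiguity. The main thing to check is therefore that the promise $wt(x)<N/2$ (equivalently $k<N/2$) removes this ambiguity after measurement, so that no query power is actually lost; the state preparation and the Fourier-mode bookkeeping are then routine.
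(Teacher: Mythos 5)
Your proof is correct and follows essentially the same route as the paper: both prepare the uniform superposition over even-weight strings, make one parity-restricted query, apply $H^{\otimes N}$ to obtain $\frac{1}{\sqrt{2}}(|x\rangle+|\bar{x}\rangle)$, and use the promise $wt(x)<N/2$ to break the $x$-versus-$\bar{x}$ ambiguity. The only real difference is that you resolve that ambiguity classically after measurement, whereas the paper packages it into a coherent unitary $W:|x\rangle\mapsto|\psi_x\rangle$ (justified by the pairwise orthogonality of the $|\psi_x\rangle$) so that $W^{-1}$ returns $|x\rangle$ exactly --- a form that the main algorithm $Find^*(k)$ needs when invoking this lemma inside a larger superposition.
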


\begin{proof}
For a given oracle $x \in S_{<N/2}$, define
\[
|\psi_x\rangle
=\frac{1}{\sqrt{2^{N-1}}}\sum_{\widetilde{q}\in Q_{even}}(-1)^{\widetilde{q}\cdot x}|\widetilde{q}\rangle.
\]
where $Q_{even}=\{\widetilde{q}\in\{0,1\}^N \mid wt(\widetilde{q})=0\ \mathrm{mod}\ 2\}$.
%by a single parity-restricted query to the IP oracle. %(where we omit the normalized coefficient).
Then the Hadamard transform of $|\psi_x\rangle$, $H|\psi_x\rangle$, can be rewritten as follows:
\begin{align*}
H|\psi_x\rangle
&= \frac{1}{\sqrt{2^{N-1}}}\sum_{\widetilde{q}\in Q_{even}} (-1)^{\widetilde{q}\cdot x} H|\widetilde{q}\rangle
\ =\ \frac{1}{2^{N-1}\sqrt{2}} \sum_{\widetilde{q}\in Q_{even}}\sum_{z \in \{0,1\}^N}(-1)^{\widetilde{q}\cdot(x \oplus z)}|z\rangle\\
%&= \frac{1}{2^{N-1}\sqrt{2}}\sum_{\widetilde{q}\in Q_{even}}(|x\rangle+(-1)^{wt(\widetilde{q})}|\bar{x}\rangle)
%+\frac{1}{2^{N-1}\sqrt{2}}\sum_{\widetilde{q}\in Q_{even}}\sum_{z\neq x,\bar{x}}
%(-1)^{\widetilde{q}\cdot(x \oplus z)}|z\rangle\\
&= \frac{1}{\sqrt{2}} \left(|x\rangle + |\bar{x}\rangle\right)
+\frac{1}{2^{N-1}\sqrt{2}}\sum_{\widetilde{q}\in Q_{even}}\sum_{z\neq x,\bar{x}}
(-1)^{\widetilde{q}\cdot(x \oplus z)}|z\rangle\\ 
&=\frac{1}{\sqrt{2}} \left(|x\rangle + |\bar{x}\rangle\right).
\end{align*}
Note that the last equality in the above equations holds; 
the second term must vanish because the first term already has a unit length. 
For any $x\neq y$, $H|\psi_x\rangle=(|x\rangle + |\bar{x}\rangle)/\sqrt{2}$ 
and $H|\psi_y\rangle=(|y\rangle + |\bar{y}\rangle)/\sqrt{2}$ are orthogonal 
since $x \neq \bar{y}$ by the restriction of their Hamming weights. 
This implies that $|\psi_x\rangle$ is orthogonal to~$|\psi_y\rangle$ for
any $x \neq y$, and hence there is a unitary transformation $W: |x\rangle\mapsto |\psi_x\rangle$. 
Thus we can design an algorithm similar to Bernstein-Vazirani \cite{BV97} 
just replacing the Hadamard transform by $W$. 
For a concrete (polynomial-time) construction of~$W,$ see Appendix \ref{make_w}. %\hfill$\Box$
\end{proof}

Now we give the proof of our main result.

\begin{proofof}{Theorem \ref{main_theorem}}%\subsubsection{Bounded Error Algorithm}\label{bounded-error}
For exposition, we first give a bounded-error algorithm ($Find^*(k)$) 
and then make it exact ($Find(k)$). 
In what follows, for a query string~$\widetilde{q}$, 
let $I(\widetilde{q})$ be the set of indices $i$ such that 
$\widetilde{q}_i=1$. This set specifies which $wt(\widetilde{q})$ coins of 
the $N$ coins are placed on the two pans. Let $P_{I(\widetilde{q})}$ be 
the set of all partitions of the set $I(\widetilde{q})$ of size $wt(\widetilde{q})$ 
($=$ even by Lemma \ref{parity-restricted}) into two sets of size $wt(\widetilde{q})/2$. 
Note that each partition $(Y,\overline{Y})$ in $P_{I(\widetilde{q})}$ specifies 
how to split the~$wt(\widetilde{q})$ coins in half to place them on the left and right pans, 
and can be identified with the corresponding query $q$ to the B-oracle. 
Finally, let $\chi(Y,\overline{Y})$ be the answer 
for the query $(Y,\overline{Y})\in P_{I(\widetilde{q})}$ to the B-oracle.

\

\noindent
{\bf Algorithm $Find^*(k)$.}

1. Prepare $N$ qubits $|0\rangle^{\otimes N}$ in a register ${\sf R}$, 
and apply a unitary transformation~$W$ of Lemma \ref{parity-restricted} to them. 
Then, we have the state
$\frac{1}{\sqrt{2^{N-1}}}\sum_{\widetilde{q}\in Q_{even}}|\widetilde{q}\rangle_{\sf R}$.

2. For each superposed $\widetilde{q}$, implement Steps 2.1--2.4 
on a register ${\sf R}'$ using $\widetilde{q}$ as a control part. 

\hspace{0.5cm} 
2.1. Apply a unitary transformation ${\cal A}_{\widetilde{q}}$ 
to the initial state $|0\rangle$ on ${\sf R}'$ to create a quantum state 
$
{\cal A}_{\widetilde{q}}|0\rangle:=
\frac{1}{\sqrt{|P_{I(\widetilde{q})}|}}\sum_{(Y,\overline{Y})\in P_{I(\widetilde{q})}}
|Y,\overline{Y}\rangle_{{\sf R}'},
$ 
which represents a uniform superposition of all partitions $(Y,\overline{Y})$ in $P_{I(\widetilde{q})}$.  
Then, the current state~is 
\begin{align*}
|\xi_{2,1}\rangle
&= \sum_{\widetilde{q}\in Q_{even}}|\widetilde{q}\rangle_{\sf R}
\sum_{(Y,\overline{Y})\in P_{I(\widetilde{q})}}\gamma\alpha|Y,\overline{Y}\rangle_{{\sf R}'}\\
&=\!\!\!\!\!\! 
\sum_{\widetilde{q}\in Q_{even}\cap Q_e}
|\widetilde{q}\rangle_{\sf R}
\sum_{(Y,\overline{Y})\in P_{I(\widetilde{q})}}\!\!\!\!
\gamma\alpha|Y,\overline{Y}\rangle_{{\sf R}'}
+
\sum_{\widetilde{q}\in Q_{even}\cap Q_o}
|\widetilde{q}\rangle_{\sf R}
\sum_{(Y,\overline{Y})\in P_{I(\widetilde{q})}}\!\!\!\!
\gamma\alpha|Y,\overline{Y}\rangle_{{\sf R}'}
\end{align*}
where $Q_{e}$ (resp.~$Q_{o}$) denotes the set of all $\widetilde{q}$'s 
such that $M(\widetilde{q},x)$ includes an even (resp.~odd) number of $1$'s. 
Also, $\gamma=1/\sqrt{2^{N-1}}$ and $\alpha=1/\sqrt{|P_{I(\widetilde{q})}|}$. 

\hspace{0.5cm}
2.2. Let $\overline{\chi}$ be the Boolean function defined by 
$\overline{\chi}(Y,\overline{Y})=1$ if and only if 
$\chi(Y,\overline{Y})=0$ (that is, the scale is balanced).
%the answer for the query $(Y,\overline{Y})$ to the B-oracle, $\chi(Y,\overline{Y})$, 
%is $0$ (that is, the scale is balanced). 
Then, under the above ${\cal A}_{\widetilde{q}}$ and $\overline{\chi}$, run the amplitude amplification algorithm 
${\bf QSearch}({\cal A}_{\widetilde{q}},\overline{\chi})$ 
when the initial success probability of ${\cal A}_{\widetilde{q}}$ 
is unknown (Theorem 3 in \cite{BHMT02}). 
%while the number of repetitions of the Grover-like subroutine (see the proof of Lemma \ref{length}) is at most $c_0k^{1/4}$, where $c_0$ is a large constant. 
Here ``success'' means the scale is balanced and hence we use~$\overline{\chi}$, not~$\chi$, in {\bf QSearch}. 
Then we obtain a state in the form of 
\[
|\xi_{2,2}\rangle 
=\!\!\!\!\!\!
\sum_{\widetilde{q}\in Q_{even}\cap Q_{e}}|\widetilde{q}\rangle_{{\sf R}}\!\!\!\!\!\!
%\left(\frac{1}{\sqrt{|P_{I(\widetilde{q})}|}}
\sum_{(Y,\overline{Y})\in P_{I(\widetilde{q})} }\!\!\!\!
\gamma \beta_{Y} |Y,\overline{Y},g_{Y}\rangle_{{\sf R'}}
+\!\!\!\!\!\!
\sum_{\widetilde{q}\in Q_{even} \cap Q_{o} }|\widetilde{q}\rangle_{{\sf R}}\!\!\!\!\!\!
%\left(\frac{1}{\sqrt{|P_{I(\widetilde{q})}|}}
\sum_{(Y,\overline{Y}) \in P_{I(\widetilde{q})} }\!\!\!\!
\gamma \alpha |Y,\overline{Y},g_{Y}\rangle_{{\sf R'}} 
\]
where $|g_{Y}\rangle$ is the garbage state. Note that, in the first term, 
the amplitudes $\beta_{Y}$ such that $\overline{\chi}(Y,\overline{Y})=1$ are now large 
by amplitude amplification while 
the second term does not change since the scale is always tilted. 

\hspace{0.5cm} 2.3. If Step 2.2 finds a ``solution,'' i.e., 
a partition $(Y,\overline{Y})$ such that $\overline{\chi}(Y,\overline{Y})=1$, 
then do nothing. 
%(and then $|\widetilde{q}\rangle$ is transformed into $|\widetilde{q}\rangle$). 
Otherwise, flip the phase (and then the phase is kick-backed into ${\sf R}$). 
%and $|\widetilde{q}\rangle_{\sf R}$ is transformed into $-|\widetilde{q}\rangle_{\sf R}$).
Notice that when $M(\widetilde{q},x)$ includes an odd number of $1$'s, 
the phase is always flipped, while when it includes an even number of $1$'s, 
the phase is not flipped with high amplitude. Now the current state is
\begin{align*}
|\xi_{2,3}\rangle &= \!\!\!\!\!\!\!\!
\sum_{\widetilde{q}\in Q_{even}\cap Q_e}\!\!\!\!\!\!|\widetilde{q}\rangle_{{\sf R}}\!\!\!\!
\sum_{(Y,\overline{Y})\in P_{I(\widetilde{q})}}\!\!\!\!\!\!\!\!
\gamma \beta_{Y} (-1)^{\chi(Y,\overline{Y})}|Y,\overline{Y},g_{Y}\rangle_{{\sf R'}}-\!\!\!\!\!\!\!\!\!\!\!
\sum_{\widetilde{q}\in Q_{even}\cap Q_o}\!\!\!\!\!\!
|\widetilde{q}\rangle_{{\sf R}}\!\!\!\!
\sum_{(Y,\overline{Y})\in P_{I(\widetilde{q})}}\!\!\!\!\!\!\!\!\gamma\alpha |Y,\overline{Y},g_{Y}\rangle_{{\sf R'}}\\
&= \!\!\!\!\!\!
\sum_{\widetilde{q}\in Q_{even}\cap Q_e}\!\!\!\!
|\widetilde{q}\rangle_{{\sf R}}
\sum_{(Y,\overline{Y})\in P_{I(\widetilde{q})}}
\!\!\!\! \!\!\gamma\beta_{Y} |Y,\overline{Y},g_{Y}\rangle_{{\sf R'}}
-\!\!\!\!\!\!\!\!\!\sum_{\widetilde{q}\in Q_{even}\cap Q_o} \!\!\!\!
|\widetilde{q}\rangle_{{\sf R}}
\sum_{(Y,\overline{Y})\in P_{I(\widetilde{q})}}
\!\!\!\!\!\! \gamma\alpha |Y,\overline{Y},g_{Y}\rangle_{{\sf R'}} -2\!\!\!\!\!\!\!\sum_{\widetilde{q}\in Q_{even}\cap Q_e}\!\!\!\!
|\widetilde{q}\rangle_{{\sf R}}
|err_{\widetilde{q}}\rangle_{{\sf R}'}
%\\&=& 
\ignore{
\sum_{\widetilde{q}\in Q_{even}}
(-1)^{\widetilde{q}\cdot x}|\widetilde{q}\rangle_{{\sf R}}
\sum_{(Y,\overline{Y})\in P_{I(\widetilde{q})}}\beta_{Y} |Y,\overline{Y},g_{Y}\rangle_{{\sf R'}}
}
\end{align*}
where $|err_{\widetilde{q}}\rangle_{{\sf R}'}
=\sum_{(Y,\overline{Y})\in P_{ I(\widetilde{q}) } :\chi(Y,\overline{Y})=1}
\gamma\beta_{Y} |Y,\overline{Y},g_{Y}\rangle_{{\sf R'}}$.

\hspace{0.5cm} 2.4. Reverse the quantum transformation done in Steps 2.1 and 2.2. 
Notice that the reversible transformation is done on ${\sf R}'$ in parallel for each $\widetilde{q}$ 
while the contents of ${\sf R}$ does not change since it is the control part. 
Therefore, the state becomes  
\begin{align*}
|\xi_{2,4}\rangle
&=
\frac{1}{\sqrt{2^{N-1}}}\!\!
\sum_{\widetilde{q}\in Q_{even}\cap Q_{e}}\!\!\!\!\!\!\!\!
|\widetilde{q}\rangle_{{\sf R}}|0\rangle_{{\sf R}'}
-
\frac{1}{\sqrt{2^{N-1}}}\!\!
\sum_{\widetilde{q}\in Q_{even}\cap Q_{o}}\!\!\!\!\!\!\!\!
|\widetilde{q}\rangle_{{\sf R}}|0\rangle_{{\sf R}'}
-
2\!\!\!\!\!\!\!\!\sum_{\widetilde{q}\in Q_{even}\cap Q_e}\!\!\!\!\!\!\!\!
|\widetilde{q}\rangle_{{\sf R}}
|err'_{\widetilde{q}}\rangle_{{\sf R}'}\\
&= 
\frac{1}{\sqrt{2^{N-1}}}\sum_{\widetilde{q}\in Q_{even}}
(-1)^{\widetilde{q}\cdot x}|\widetilde{q}\rangle_{{\sf R}}|0\rangle_{{\sf R}'}
-2\sum_{\widetilde{q}\in Q_{even}\cap Q_e}
|\widetilde{q}\rangle_{{\sf R}}
|err'_{\widetilde{q}}\rangle_{{\sf R}'}
\end{align*}
where $|err'_{\widetilde{q}}\rangle_{{\sf R}'}$ is the transformed state of $|err_{\widetilde{q}}\rangle_{{\sf R}'}$.
\ignore{(Recall that ${\bf QSearch}({\cal A},\chi)$ is based on the Grover subroutine consisting of ${\cal A}$, its inverse, the oracle transformation $S_\chi$ defined by  $S_\chi|Y,\overline{Y}\rangle=(-1)^{\chi(Y,\overline{Y})}|Y,\overline{Y}\rangle$, and the unitary transformation $U_{0}$ defined by $U_{0}|z\rangle=|z\rangle$ if $z\neq 0$ and $-|z\rangle$.) }

3. Apply $W^{-1}$ to the state in ${\sf R}$. Then we obtain a final state 
\[
|\xi_3\rangle=|x\rangle_{{\sf R}}|0\rangle_{{\sf R}'}-2W^{-1}\left(\sum_{\widetilde{q}\in Q_{even}\cap Q_e}
|\widetilde{q}\rangle_{{\sf R}}
|err'_{\widetilde{q}}\rangle_{{\sf R}'}\right).
\]
Then measure ${\sf R}$ in the computational basis. (End of Algorithm)

\

For justifying the correctness of $Find^*(k)$, it suffices to show that the squared magnitude of the second term of 
$|\xi_3\rangle$ is a small constant, say, $1/400$, since we then 
measure the desired value $x$ with probability at least $9/10$ 
(in fact, at least $(1-\sqrt{1/400})^2>9/10$). 
By the unitarity, its squared magnitude is equal to that of 
the last term of $|\xi_{2,3}\rangle$, that is, we want to evaluate the following value~$\epsilon$. 
\[
\epsilon:=4\left\|\sum_{\widetilde{q}\in Q_{even}\cap Q_e}
|\widetilde{q}\rangle_{{\sf R}}
|err_{\widetilde{q}}\rangle_{{\sf R}'}
\right\|^2=4\sum_{\widetilde{q}\in Q_{even}\cap Q_e}
|\widetilde{q}\rangle_{{\sf R}}
\left\| |err_{\widetilde{q}}\rangle_{{\sf R}'}
\right\|^2.
\]

\begin{lemma}\label{length}
$\epsilon$ is at most $1/400$. %(under a large constant $c_0$).
\end{lemma}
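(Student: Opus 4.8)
The plan is to reduce $\epsilon$ to a sum of the per-branch failure probabilities of the amplitude amplification performed in Step 2.2, and then to bound each of them uniformly. For each fixed $\widetilde{q}\in Q_{even}\cap Q_e$, the transformation ${\bf QSearch}({\cal A}_{\widetilde{q}},\overline{\chi})$ acts unitarily on ${\sf R}'$ (controlled by ${\sf R}$), so the state it produces on ${\sf R}'$ is a unit vector; hence $\sum_{(Y,\overline{Y})}\beta_Y^2=1$ and $\||err_{\widetilde{q}}\rangle\|^2=\gamma^2\sum_{(Y,\overline{Y}):\chi(Y,\overline{Y})=1}\beta_Y^2=\gamma^2 p_{\widetilde{q}}$, where $p_{\widetilde{q}}$ is exactly the probability that a measurement of ${\sf R}'$ in this branch returns a \emph{tilted} partition, i.e. the failure probability of the amplification. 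Substituting into the definition of $\epsilon$ and using $\gamma^2|Q_{even}|=1$ gives $\epsilon=4\gamma^2\sum_{\widetilde{q}\in Q_{even}\cap Q_e}p_{\widetilde{q}}\le 4\max_{\widetilde{q}}p_{\widetilde{q}}$, so it suffices to prove $p_{\widetilde{q}}\le 1/1600$ for every $\widetilde{q}\in Q_e$.

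The next step is to control the initial success probability $a_{\widetilde{q}}$ of ${\cal A}_{\widetilde{q}}$, namely the probability that a uniformly random partition in $P_{I(\widetilde{q})}$ places exactly half of the false coins of $M(\widetilde{q},x)$ on each pan. Writing $n=wt(\widetilde{q})$ and letting $m$ (even, by Lemma \ref{parity-restricted}) be the number of false coins in $M(\widetilde{q},x)$, this is the hypergeometric quantity $a_{\widetilde{q}}=\binom{m}{m/2}\binom{n-m}{(n-m)/2}/\binom{n}{n/2}$. A routine estimate shows that, for fixed even $m$, this quantity is bounded below by its limiting value as $n\to\infty$, the central-binomial term $\binom{m}{m/2}2^{-m}=\Theta(1/\sqrt{m})$ (the extreme cases $m=0$ and $m=n$ give $a_{\widetilde{q}}=1$ and contribute no error at all). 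Since $m\le k$ for every $\widetilde{q}\in Q_e$, this yields the uniform lower bound $a_{\widetilde{q}}=\Omega(1/\sqrt{k})$.

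Finally, I would invoke the amplitude amplification for unknown initial probability (Theorem 3 of \cite{BHMT02}), run coherently with a common, fixed budget of $T=\Theta(k^{1/4})$ Grover iterations, chosen as a large constant multiple of $1/\sqrt{a_{\min}}$ with $a_{\min}=\Omega(1/\sqrt{k})$. Because the expected number of applications of ${\cal A}_{\widetilde{q}}$ needed to reach a balanced partition is $O(1/\sqrt{a_{\widetilde{q}}})=O(k^{1/4})$, truncating at $T$ and taking the constant large enough forces the residual failure $p_{\widetilde{q}}$ below $1/1600$ for every branch at once, giving $\epsilon\le 4\cdot(1/1600)=1/400$. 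The hard part is precisely this uniformity: since $a_{\widetilde{q}}$ is unknown and varies with $m$, a plain fixed-angle Grover schedule would \emph{overshoot} on the branches with small $m$ (large $a_{\widetilde{q}}$) and could leave a large failure amplitude there, so the argument must rely on the robust, geometrically scheduled unknown-$a$ version of ${\bf QSearch}$ and verify that one common iteration count simultaneously drives all $p_{\widetilde{q}}$ below the threshold while keeping $T=O(k^{1/4})$.
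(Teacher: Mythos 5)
Your proposal is correct and follows essentially the same route as the paper: reduce $\epsilon$ to a uniform per-branch bound via $\gamma^2|Q_{even}|=1$, lower-bound the initial success probability by the hypergeometric quantity $\binom{m}{m/2}\binom{n-m}{(n-m)/2}/\binom{n}{n/2}=\Omega(1/\sqrt{k})$, and truncate ${\bf QSearch}$ at $c_0k^{1/4}$ iterations to drive each branch's failure below $1/1600$. Your explicit attention to the uniformity across branches with differing $m$ (why the unknown-$a$ version of ${\bf QSearch}$ is needed rather than a fixed-angle schedule) is a point the paper leaves implicit, but it is a clarification rather than a different argument.
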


\begin{proof}
%For any $\widetilde{q}\in Q_e$, we can see that Step 2.3 does nothing with high probability. 
Consider an arbitrary $\widetilde{q}$ in $Q_{even}\cap Q_e$. 
When $M(\widetilde{q},x)$ includes $m$ $(\leq k)$ $1$'s (where $m$ is even), 
the state ${\cal A}_{\widetilde{q}}|0\rangle$ includes a partition $(Y,\overline{Y})$ 
such that $\overline{\chi}(Y,\overline{Y})=1$ with probability at least 
\[
p=\frac{\binom{m}{m/2}\binom{wt(\widetilde{q})-m}{(wt(\widetilde{q})-m)/2}}{\binom{wt(\widetilde{q})}{wt(\widetilde{q})/2}}=\Omega(1/\sqrt{m})=\Omega(1/\sqrt{k}).
\]
%(which is the initial success probability of ${\cal A}_{\widetilde{q}}$). 
By Theorem 3 in \cite{BHMT02}, it is guaranteed that, in the algorithm 
%\footnote{The basic idea of ${\bf QSearch}({\cal A}_{\widetilde{q}},\chi)$ was appeared in \cite{BBHT98}.} 
${\bf QSearch}({\cal A}_{\widetilde{q}},\overline{\chi})$, 
an expected number of applications of the Grover-like subroutine 
to find a ``solution,'' i.e., a partition $(Y,\overline{Y})$ such that $\overline{\chi}(Y,\overline{Y})=1$, 
is bounded by $O(1/\sqrt{p})=O(k^{1/4})$. 
The subroutine consists of (i) ${\cal A}_{\widetilde{q}}$,
(ii) its inverse, (iii) the transformation $O_{\overline{\chi}}$ defined by $O_{\overline{\chi}}|Y,\overline{Y}\rangle
=(-1)^{\overline{\chi}(Y,\overline{Y})}|Y,\overline{Y}\rangle$,
and (iv) the transformation~$U_{0}$ defined by $U_{0}|z\rangle=|z\rangle$
if $z\neq 0$ and $-|z\rangle$ if~$z=0$, where ${\cal A}_{\widetilde{q}}$ 
(and hence its inverse) and $U_{0}$ can be implemented without any query to the B-oracle, 
and~$O_{\overline{\chi}}$ can be implemented with one query to the B-oracle. 
Thus the expected number of queries to find a ``solution'' is $O(k^{1/4})$.
By setting the number of applications of the subroutine to $c_0k^{1/4}$ 
where $c_0$ is a large constant, Step 2.2 finds a ``solution'' 
with probability at least $1599/1600$. This means that for any $\widetilde{q}\in Q_{even}\cap Q_e$, 
$\sum_{(Y,\overline{Y})\in P_{I(\widetilde{q})}:\overline{\chi}(Y,\overline{Y})=0}\beta_{Y} 
|Y,\overline{Y},g_{Y}\rangle_{{\sf R'}}$ has squared magnitude at most $1/1600$.  
Recalling $\gamma=1/\sqrt{2^{N-1}}$ we have
\[
\epsilon=4\gamma^2\sum_{\widetilde{q}\in Q_{even}\cap Q_e}
\left\|\sum_{(Y,\overline{Y})\in P_{I(\widetilde{q})}:\overline{\chi}(Y,\overline{Y})=0}
\beta_{Y} 
|Y,\overline{Y},g_{Y}\rangle_{{\sf R'}}\right\|^2\leq 1/400. 
\] 
This completes the proof of Lemma \ref{length}. %\hfill$\Box$
\end{proof}

Finally, it is easy to see from the above proof that the query complexity of $Find^*(k)$ is $O(k^{1/4})$ 
since it makes $O(k^{1/4})$ queries in Step 2 and no queries in Steps 1 and 3. 

%\subsubsection{Exact Algorithm}\label{sec:exact_algo}

Now we consider the exact algorithm $Find(k)$. By the symmetric structure of algorithm $Find^*(k)$, 
the success probability of identifying $x$ correctly is independent of $x$ 
(recall that the oracle candidates are $\binom{N}{k}$ $N$-bit strings $x$ 
with Hamming weight $k$). Thus we can use the so-called exact amplitude amplification algorithm 
(Theorem 4 in \cite{BHMT02}) to convert it into the exact algorithm.

Here is the brief description of $Find(k)$. (see Appendix \ref{sec:find_k} for the details).
First, we implement $Find^*(k)$. As shown above, $Find^*(k)$
produces the correct output (i.e., $k$ false coins) with a constant probability ($\geq 9/10$) 
larger than $1/4$. Notice that we can make the success probability exactly $1/4$ 
by an appropriate adjustment. We need an algorithm for checking if the output is correct 
to amplify the success probability to $1$. Namely, an algorithm $Check$ needs to judge 
whether $k$ coins are indeed all false, which can be implemented classically 
in $O(\log k)$ weighings (as seen in Appendix \ref{sec:find_k}). 
Then we can implement the exact amplitude amplification: 
Like the $1/4$-Grover's algorithm \cite{BBHT98}, flip the phase if $Check$ judges that the output is correct, 
and apply the reflection about the state obtained after $Find^*(k)$. It is not difficult to see that 
$Find(k)$ always finds~$k$ false coins and the total complexity is $O(k^{1/4})$. 
Therefore, the proof of Theorem \ref{main_theorem} is completed. 
\end{proofof}

\section{Lower Bounds}\label{sec:lower}
\subsection{Basic Ideas}\label{sec:31}

In this section, we discuss the lower bound of finding $k$ false coins
from $N$ coins. We conjecture that the upper bound $O(k^{1/4})$ is tight but, 
unfortunately, we have not been able to show whether it is true or not. 
Instead, we show that if there would be an algorithm that improves 
the upper bound essentially, then it would have a completely different structure
from our algorithm.  

Before describing our results, we observe two properties of our algorithm $Find(k)$.
First, $Find(k)$ essentially uses only ``big pans,'' i.e., it always places 
at least $\Omega(N)$ coins on the pans, which is called the {\em big-pan property}. 
(The algorithm $Find^*(k)$ in Section \ref{one-fourth} uses ``small pans'' 
but it can be adapted with no essential change so that it works even if the size of pans must be big, 
as easily shown in Appendix \ref{sec:find_k}.)
Second, the B-oracle is always used in such a way that 
once the coins placed on the two pans are determined, the partition of them 
into the two pans is done uniformly at random, which is called the {\em random-partition property}. 
What we show in this section is that the current upper bound is best achievable   
for any algorithm that satisfies at least one of these two properties. 

For this purpose, we revisit one version of the (nonnegative) quantum adversary method, 
called {\it the strong weighted adversary method} in \cite{SS06}, due to Zhang~\cite{Zha04}. 
Let $f$ be a function from a finite set $S$ to another finite set $S'$. 
Recall that in a query complexity model, an input $x\in S$ is given as an oracle. 
An algorithm ${\cal A}$ would like to compute $f(x)$ while it can obtain 
the information about~$x$ by a unitary transformation 
$
O_x|q,a,z\rangle=|q,a\oplus\zeta(x;q),z\rangle,
$
where $|q\rangle$ is the register for a query string $q$ from a finite set $Q$, 
$|a\rangle$ is the register for the binary answer $\zeta(x;q)$, a function 
from $S\times Q$ to $\{0,1\}$, and $|z\rangle$ is the work register. 
Note that the adversary method usually assumes the so-called index oracle, 
namely $q$ is an integer $1 \leq i \leq N$ and $\zeta(x;q)$
is the $i$th bit ($0$ or~$1$) of $x\in\{0,1\}^N$. However, one can easily see that the above
generalization to $\zeta(x;q)$ requires no essential changes for its proof.
Thus Theorem 14 of \cite{Zha04} can be restated as follows:

\begin{lemma}\label{adv_boracle}
Let $w,w'$ denote a weight scheme as follows:
\begin{enumerate}
\item Every pair $(x,y)\in S\times S$ is assigned 
a nonnegative weight $w(x,y)=w(y,x)$ that satisfies $w(x,y)=0$ whenever $f(x)=f(y)$.

\item Every triple $(x,y,q)\in S\times S\times Q$ 
is assigned a nonnegative weight $w'(x,y,q)$ that satisfies $w'(x,y,q)=0$ 
whenever $\zeta(x;q)=\zeta(y;q)$ or $f(x)=f(y)$, and $w'(x,y,q)w'(y,x,q)\geq w^2(x,y)$ 
for all $x,y,q$ such that $\zeta(x;q)\neq \zeta(y;q)$ and $f(x)\neq f(y)$.
\end{enumerate}
For all $x,q$, let $\mu(x)=\sum_y w(x,y)$ and $\nu(x,q)=\sum_y w'(x,y,q)$. 
Then, the quantum query complexity of $f$ is at least
\[
\Omega\left(
\max_{w,w'}\min_{\substack{x,y,q:\ w(x,y)>0,\\ \zeta(x;q)\neq \zeta(y;q)} } 
\sqrt{\frac{\mu(x)\mu(y)}{\nu(x,q)\nu(y,q)}}
\right).
\]
\end{lemma}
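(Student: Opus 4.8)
The plan is to follow the standard (nonnegative) weighted‑adversary argument of Zhang and to verify that replacing the index oracle by the general oracle $O_x|q,a,z\rangle=|q,a\oplus\zeta(x;q),z\rangle$ leaves every step intact. First I would fix a $T$‑query algorithm, written as an alternation $U_T O_x U_{T-1}\cdots U_1 O_x U_0$ of input‑independent unitaries $U_t$ and oracle calls, and denote by $|\psi_x^t\rangle$ the state on input $x$ after $t$ queries; the initial state $|\psi_x^0\rangle=U_0|0\rangle$ is the same for every $x$. The entire argument is driven by the weighted potential $W^t=\sum_{x,y}w(x,y)\langle\psi_x^t|\psi_y^t\rangle$, and the bound comes from comparing how large $|W^0-W^T|$ must be against how little a single query can change $W^t$.

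For the boundary values I would note that $\langle\psi_x^0|\psi_y^0\rangle=1$ for all $x,y$, so $W^0=\sum_{x,y}w(x,y)=\sum_x\mu(x)$. At the end, condition~1 gives $w(x,y)=0$ whenever $f(x)=f(y)$, so every surviving term has $f(x)\neq f(y)$; correctness of any bounded‑error algorithm then forces $|\langle\psi_x^T|\psi_y^T\rangle|\leq c$ for a constant $c<1$ (the two final states must be nearly distinguishable), whence $|W^T|\leq c\,W^0$ and $|W^0-W^T|=\Omega(W^0)$.

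The heart of the proof, and the only place the oracle actually enters, is the per‑query bound. Since the $U_t$ are input‑independent and unitary, $\langle\psi_x^t|\psi_y^t\rangle=\langle\phi_x^{t-1}|O_x^\dagger O_y|\phi_y^{t-1}\rangle$, where $|\phi^{t-1}\rangle$ is the state just before the $t$‑th query. Here I would check the single structural fact inherited from the index‑oracle case: $O_x^\dagger O_y$ sends $|q,a,z\rangle$ to $|q,a\oplus\zeta(x;q)\oplus\zeta(y;q),z\rangle$, so it is the identity on all components with $\zeta(x;q)=\zeta(y;q)$ and merely flips the one‑bit answer register on components with $\zeta(x;q)\neq\zeta(y;q)$. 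This is exactly the property Zhang uses, and it holds verbatim for the generalized $\zeta$, which is the precise sense in which no essential change is required. Writing $p_x(q)$ for the probability mass on query $q$ in $|\phi_x^{t-1}\rangle$, a Cauchy--Schwarz estimate then yields
\[
\left|\langle\psi_x^t|\psi_y^t\rangle-\langle\psi_x^{t-1}|\psi_y^{t-1}\rangle\right|\ \leq\ 2\!\!\sum_{q:\ \zeta(x;q)\neq\zeta(y;q)}\!\!\sqrt{p_x(q)\,p_y(q)}.
\]

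Finally I would carry out the weighted‑adversary bookkeeping. Summing this inequality against $w(x,y)$ and invoking condition~2 in the form $w(x,y)\leq\sqrt{w'(x,y,q)w'(y,x,q)}$ (valid since $w(x,y)>0$ already forces $f(x)\neq f(y)$, and the inner sum forces $\zeta(x;q)\neq\zeta(y;q)$), I would factor each summand as $\sqrt{w'(x,y,q)p_x(q)}\cdot\sqrt{w'(y,x,q)p_y(q)}$ and apply AM--GM with the balancing parameter $\theta_{x,y,q}=\sqrt{\mu(x)\nu(y,q)/(\mu(y)\nu(x,q))}$, which satisfies $\theta_{y,x,q}=\theta_{x,y,q}^{-1}$ so that the two resulting sums coincide under $x\leftrightarrow y$. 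The definition of the bound $D=\min\sqrt{\mu(x)\mu(y)/(\nu(x,q)\nu(y,q))}$ then gives, on the relevant support, $\sqrt{\nu(y,q)/\mu(y)}\leq D^{-1}\sqrt{\mu(x)/\nu(x,q)}$, and using $\sum_y w'(x,y,q)=\nu(x,q)$ together with $\sum_q p_x(q)=1$ collapses everything to $|W^t-W^{t-1}|\leq (2/D)\sum_x\mu(x)=(2/D)W^0$. Combining with the boundary estimate yields $T\cdot(2/D)W^0\geq\Omega(W^0)$, i.e. $T=\Omega(D)$, which is the claimed bound. I expect the main obstacle to be purely notational: tuning the factor $\theta_{x,y,q}$ so that, after the two symmetric sums are combined, the $\mu$'s and $\nu$'s telescope cleanly into $D$ over the correct support; the quantum content (the per‑query inequality) is genuinely oracle‑independent and is therefore immediate once the action of $O_x^\dagger O_y$ has been checked.
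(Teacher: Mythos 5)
Your proposal is correct and follows essentially the same route as the paper: the authors simply invoke Zhang's Theorem 14 and observe that the only oracle-dependent step --- that $O_x^\dagger O_y$ acts as the identity on components with $\zeta(x;q)=\zeta(y;q)$ and as a one-bit flip of the answer register otherwise --- carries over verbatim to the generalized $\zeta(x;q)$, which is exactly the structural fact you isolate and verify before running the standard progress-measure/weight-scheme bookkeeping. Your per-query bound, the choice of balancing parameter $\theta_{x,y,q}$, and the telescoping into the minimized ratio all match the argument the paper relies on (and spells out in its proof of the stochastic variant, Lemma~\ref{general_lower}).
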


\subsection{Big Pan Lower Bounds}\label{sec:32}
First, we show that our upper bound is tight under the big-pan property.
In what follows, $L\geq l$ denotes the restriction that at least $l$ coins 
must be placed on the pans whenever the balance is used. 

\begin{theorem}\label{bigpan}
If $L\geq l$, we need $\Omega((lk/N)^{1/4})$ weighings to find $k$ false coins. 
In particular, $\Omega(k^{1/4})$ weighings are necessary if there is some constant $c$ 
such that $L\geq N/c$.
\end{theorem}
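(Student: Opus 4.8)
The plan is to invoke the strong weighted adversary method of Lemma~\ref{adv_boracle} with $f$ taken to be the identity on $S=\{x\in\{0,1\}^N : wt(x)=k\}$ (so $f(x)\neq f(y)$ whenever $x\neq y$), with the query set $Q$ restricted to big-pan queries (those placing at least $l$ coins on the two pans), and with $\zeta(x;q)=\chi(x;q)$. The natural relation to start from is the single-swap relation: set $w(x,y)=1$ exactly when $y$ is obtained from $x$ by moving one false coin to a currently fair position, and $w(x,y)=0$ otherwise. This makes the weighted graph the Johnson graph $J(N,k)$, so $\mu(x)=\sum_y w(x,y)=k(N-k)$ is the same for every $x$ and $\sqrt{\mu(x)\mu(y)}=k(N-k)=\Theta(kN)$. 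All of the content then sits in bounding $\nu(x,q)\nu(y,q)$ over distinguishing triples: the goal is to show $\nu(x,q)\nu(y,q)=O(k^{3/2}N^{5/2}/l^{1/2})$, which together with $\mu=\Theta(kN)$ gives the ratio $\Omega((lk/N)^{1/4})$.

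First I would record the structural facts. For a single-swap pair, $q$ distinguishes $x$ from $y$ only when exactly one of them is balanced, and the tilted endpoint is then \emph{just off balance}: its signed imbalance $\sum_i q_i y_i$ lies in $\{\pm1,\pm2\}$, because a single swap changes the imbalance by $-q_i+q_j\in\{-2,-1,0,1,2\}$. Thus every distinguishing edge joins a balanced input to a nearly balanced tilted input; $\nu(x,q)$ counts, for balanced $x$, the swaps that tilt it (pairs $(i,j)$ with $x_i=1$, $x_j=0$, $q_i\neq q_j$), and $\nu(y,q)$ counts, for tilted $y$, the swaps that rebalance it. The big-pan hypothesis should enter quantitatively through the rarity of a balanced outcome: when at least $l$ coins lie on the pans, the imbalance of a random weight-$k$ input has standard deviation $\Omega(\sqrt{lk/N})$, so balanced queries occur with probability only $O(\sqrt{N/(lk)})$. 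The target product bound is exactly this rarity squared times $\mu^2$, so morally we want the distinguishing pattern to inherit the factor $\sqrt{N/(lk)}$.

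The main obstacle, and the reason uniform weights do \emph{not} suffice, is the balanced input whose false coins are concentrated on the pans (say $k/2$ on each side). Such an $x$ has $\Theta(kN)$ tilting neighbors, since essentially every cross-pan or onto/off-pan move tilts it, and its nearly balanced neighbors in turn have $\Theta(kN)$ rebalancing neighbors; for these triples $\nu(x,q)\nu(y,q)=\Theta(k^2N^2)$ and the raw bound collapses to $\Omega(1)$. The difficulty is genuine because ``concentrated and balanced'' is a property of the pair \emph{relative to} $q$: for every swap pair there is a big-pan query distinguishing it through such a concentrated configuration, so no edge can simply be deleted from the relation. The plan to defeat this is to pass to a non-uniform scheme, which in the spectral reformulation of Lemma~\ref{adv_boracle} means replacing the all-ones Johnson adjacency by an adversary matrix $\Gamma$ supported on the same edges but drawn from the right eigenspace of the Johnson association scheme, chosen so that $\|\Gamma\|=\Theta(kN)$ while $\|\Gamma\circ D_q\|$ is pushed down by a factor $(lk/N)^{1/4}$ simultaneously for \emph{all} big-pan $q$, where $D_q[x,y]=[\chi(x;q)\neq\chi(y;q)]$. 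In the $w,w'$ language this amounts to choosing $w'(x,y,q)$ asymmetrically (heavier toward the rarer balanced endpoint) and weighting swap edges according to how the moved coin sits with respect to the pans, so that the contribution of the concentrated-balanced configurations is suppressed rather than dominant. Verifying the two weight conditions of the lemma and, above all, establishing the uniform operator-norm bound $\|\Gamma\circ D_q\|=O\big(kN\,(N/lk)^{1/4}\big)$ over every big-pan query is the technical heart of the argument; I expect this spectral estimate (a Krawtchouk/Hahn-type computation on the Johnson scheme) to be where essentially all the work lies.

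Finally, the ``in particular'' clause is immediate: taking $l=N/c$ gives $(lk/N)^{1/4}=(k/c)^{1/4}=\Omega(k^{1/4})$ for any constant $c$, matching the $O(k^{1/4})$ upper bound of Theorem~\ref{main_theorem} and showing the bound is tight for algorithms obeying the big-pan property.
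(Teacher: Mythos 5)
There is a genuine gap here, and you have in fact diagnosed it yourself: with the single-swap (Hamming-distance-$2$) relation and uniform weights, the concentrated-balanced configurations force $\nu(x,q)\nu(y,q)=\Theta(k^2N^2)$, so the adversary ratio collapses to $\Omega(1)$. Your proposed repair --- a non-uniform adversary matrix $\Gamma$ supported on the Johnson graph with a uniform operator-norm bound $\|\Gamma\circ D_q\|=O(kN(N/lk)^{1/4})$ over all big-pan queries --- is never constructed: no eigenspace is chosen, no weights $w'$ are written down, and the ``Krawtchouk/Hahn-type computation'' that you yourself identify as ``essentially all the work'' is not attempted. As it stands the argument proves nothing beyond the trivial bound, and it is not clear that the repair would go through even in principle, since the problematic configurations exist for every swap pair relative to some big-pan query.

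The paper's proof avoids this entirely by choosing the \emph{complete} relation rather than the Johnson graph: $w(x,y)=1$ for every pair $x\neq y$ in $S$, and $w'(x,y,q)=1$ whenever $\chi(x;q)\neq\chi(y;q)$. Then $\mu(x)=\binom{N}{k}-1$, and for a query of pan size $N/c$ the quantity $\nu(x,q)$ for a tilted $x$ is exactly the total number $\gamma(N,k,c)=\sum_{m}\binom{N/c}{m}^2\binom{(1-2/c)N}{k-2m}$ of balanced weight-$k$ strings, while $\nu(y,q)$ for a balanced $y$ is $\binom{N}{k}-\gamma(N,k,c)$. The adversary ratio thus reduces to $\binom{N}{k}/\gamma(N,k,c)$, and the whole theorem rests on the elementary counting estimate $\gamma(N,k,c)/\binom{N}{k}=O(\sqrt{c/k})$ --- precisely the ``rarity of balance'' heuristic you state, but realized globally over all of $S$ rather than locally over neighbors, which is why no concentration pathology arises. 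Your intuition about where the big-pan hypothesis enters is correct; the fix is to change the relation, not to weight the Johnson graph. (Your ``in particular'' step is fine.)
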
 

\begin{proof} Let $l=N/d$. Then the lower bound we should show is $\Omega((k/d)^{1/4})$.
We can assume that $d\leq k/3$ (otherwise, the lower bound becomes trivial). 
To use Lemma \ref{adv_boracle}, let $S=\{x\in\{0,1\}^N\mid wt(x)=k\}$,
$Q=Q_{\geq N/d}:=\bigcup_{l\geq N/d}Q_l$, $\zeta(x;q)=\chi(x;q)$, and $f(x)=x$. 
Our weight scheme is as follows: 
Let $w(x,y)=1$ for any pair $(x,y)\in S\times S$ such that $x\neq y$, 
and let $w'(x,y,q)=1$ for all $(x,y,q)\in S\times S\times Q_{\geq N/d}$ 
such that $\chi(x;q)\neq \chi(y;q)$ and $x\neq y$. 
It is easy to check that this satisfies the condition of a weight scheme. 
Then, for any $x$, we have $\mu(x)=\sum_y w(x,y)=\binom{N}{k}-1$. 
We need to evaluate $\nu(x,q)\nu(y,q)$ 
for pairs $(x,y)$ such that $\chi(x;q)=1$ and $\chi(y;q)=0$ or 
$\chi(x;q)=0$ and $\chi(y;q)=1$. Fix $q\in Q_{\geq N/d}$ arbitrarily
and assume that $q\in Q_{N/c}$ where $c\leq d$.  
When $\chi(x;q)=1$ (i.e., the scale is tilted for query $q$ when $x$ is the input), 
notice that $\nu(x,q)=\sum_{y}w'(x,y,q)$ is the number of all $y$'s 
such that the scale is balanced when $N/c$ coins are placed 
on each of the two pans according to $q$. 
Therefore, by summing up all the cases such that those 
$N/c$ coins include $m$ false ones, 
\[
\nu(x,q)=\gamma(N,k,c):=\sum_{m=0}^{k/2}\binom{N/c}{m}^2\binom{(1-2/c)N}{k-2m}.
\] 
Since $\chi(y;q)=0$, we have $\nu(y,q)=\sum_x w'(x,y,q)=
\binom{N}{k}-\gamma(N,k,c)$ by counting all $x$'s such that the scale is titled. 
Then the product $\nu(x,q)\nu(y,q)$ is $\gamma(N,k,c)\left(\binom{N}{k}-\gamma(N,k,c)\right)$. 
Similarly, when $\chi(x;q)=1$ we can see that the product is also 
$\gamma(N,k,c)\left(\binom{N}{k}-\gamma(N,k,c)\right)$. 
By Lemma \ref{adv_boracle} the quantum query complexity of our problem 
is at least
\begin{equation}\label{eq091024-1}
\Omega\left(
\min_{c:\ c\leq d} \sqrt{\frac{(\binom{N}{k}-1)^2}{\gamma(N,k,c)(\binom{N}{k}-\gamma(N,k,c))}}
\right)
= \Omega\left(\min_{c:\ c\leq d}\sqrt{\frac{\binom{N}{k}}{\gamma(N,k,c)}}
\right).
\end{equation}
Then, we can show the following lemma. 
%(see Appendix \ref{sec:balance_sum} for the proof). 

\begin{lemma}\label{balance_sum}
$\gamma(N,k,c)/\binom{N}{k}=O(\sqrt{c/k})$ for any $2\leq c\leq d$ $(\leq k/3)$.
\end{lemma}

%\subsection{Proof of Lemma \ref{balance_sum}}\label{sec:balance_sum}

\begin{proof}
Note that $\gamma(N,k,c)/\binom{N}{k}$ means the probability that
the scale is balanced when $N/c$ coins ($N$ coins include $k$ false ones)
are randomly placed on each of the two pans, and hence
its value decreases as $c$ approaches to $2$.
So, it suffices to prove the lemma for $c\geq 4$. 
 
Let us denote each term in the sum $\gamma(N,k,c)$
by $t(m)=\binom{N/c}{m}^2\binom{(1-2/c)N}{k-2m}$ for $m=0,1,\ldots,k/2$.
We divide $\gamma(N,k,c)$ into the two parts, that is, 
we write $\gamma(N,k,c)= T_{>k/2c}+T_{\leq k/2c}$ 
where $T_{>k/2c}=\sum_{m:m>k/2c} t(m)$ and $T_{\leq k/2c}=\sum_{m:m\leq k/2c} t(m)$. 
For the proof, it suffices to show that both $T_{>k/2c}/\binom{N}{k}$ 
and $T_{\leq k/2c}/\binom{N}{k}$ are bounded by $O(\sqrt{c/k})$.
First we consider $T_{>k/2c}/\binom{N}{k}$. When $N/c$ coins are randomly placed on each of the pans,
let $E_1$ be the event that at least $k/2c$ false coins
are placed on the pans, and $E_2$ be the event that the scale is balanced.
Then, we can see that $T_{>k/2c}/\binom{N}{k}=\mathrm{Pr}[E_1\wedge E_2]$
which is at most $\mathrm{Pr}[E_2|E_1]=O(1/\sqrt{k/c})=O(\sqrt{c/k})$.  
Second we consider $T_{\leq k/2c}/\binom{N}{k}$. Let $r(m)=t(m+1)/t(m)$.
Note that $r(m)$ is monotone decreasing on $m$ since 
\begin{align*}
r(m) 
&=
\frac{\binom{N/c}{m+1}^2\binom{(1-2/c)N}{k-2(m+1)}}{\binom{N/c}{m}^2\binom{(1-2/c)N}{k-2m}}\\
&=\frac{(\frac{N}{c}-m)^2(k-2m)(k-2m-1)}{(m+1)^2((1-2/c)N-k+2m+1)((1-2/c)N-k+2m+2)}.
\end{align*}
Now we verify that $r(k/2c-1)>4$. In fact, since $c\leq k/3<2+k/2$, we have
\begin{equation}\label{eq0910-1}
(1-2/c)N-k+k/c<(1-2/c)(N-k/2-c)
\end{equation}
and
\begin{equation}\label{eq0910-2}
k-k/c-3\geq k(1-2/c).
\end{equation}
Thus we obtain
\begin{align*}
r(k/2c-1) &=
\frac{(1/c)^2(N-k/2-c)^2(k-k/c-2)(k-k/c-3)}{(k/2c)^2((1-2/c)N-k+k/c)((1-2/c)N-k+k/c-1)}\\
&> \frac{4(k-k/c-2)(k-k/c-3)}{k^2(1-2/c)^2}\ \ (\mbox{by Eq.(\ref{eq0910-1})})\\
&\geq 4\ \ (\mbox{by Eq.(\ref{eq0910-2})}). 
\end{align*}
These facts imply that
\[
T_{\leq k/2c}=\sum_{m:m\leq k/2c}t(m)<\left(1+1/4+(1/4)^2+\cdots\right)t(k/2c)=(4/3)t(k/2c),
\]
which is bounded by $(4/3)t(k/c)$ since $t(m)$ takes the maximum value when $m=k/c$. 
Calculating $t(k/c)/\binom{N}{k}$ using the Stirling formula $n!\sim \sqrt{2\pi n}(N/e)^N$,
we obtain
\begin{align*}
\frac{t(k/c)}{\binom{N}{k}}
&=
\frac{\binom{N/c}{k/c}^2\binom{(1-2/c)N}{(1-2/c)k}}{\binom{N}{k}}=
\frac{\frac{k!}{((\frac{k}{c})!)^2((1-\frac{2}{c})k)!}\cdot \frac{(N-k)!}{((\frac{N-k}{c})!)^2((1-\frac{2}{c})(N-k))!}
}{\frac{N!}{((\frac{N}{c})!)^2((1-\frac{2}{c})N)!}}\\
&\sim \frac{cN}{2\pi k(N-k)\sqrt{1-2/c}},
\end{align*}
which is bounded by $O(c/k)$ since $k\leq N/2$ and $c\geq 4$.
Thus, the sum $T_{\leq k/2c}/\binom{N}{k}$ is bounded by $O(c/k)=O(\sqrt{c/k})$.
From the above, we obtain $\gamma(N,k,c)/\binom{N}{k}=O(\sqrt{c/k})$. 
\end{proof}

Now Lemma \ref{balance_sum} implies the desired bound $\Omega((k/d)^{1/4})$ 
by Eq.(\ref{eq091024-1}), and hence the proof of Theorem~\ref{bigpan} is completed. %\hfill$\Box$
\end{proof}

On the contrary, we can show that any algorithm that uses only ``small
pans'' also needs $\Omega(k^{1/4})$ queries (Theorem \ref{smallpan}). 
For instance, we cannot break the current bound $k^{1/4}$ by any algorithm that places $O(N/k)$ coins on the
pans. (Notice that the pan includes only a constant number of false
coins with high probability in this case and therefore we can achieve a better success
probability for the even false-coin case, but at the same time, we
cannot use a wide range of superpositions). Moreover, we can obtain
another lower bound for the case where ``big pans'' and ``small pans''
are both available but ``medium pans'' are not (Theorem
\ref{bigpan_generalized}).  Unfortunately one can see that there is
still a gap between the sizes of the big pans and small pans even for
a weakest nontrivial ($\omega(1)$) lower bound. See Appendix \ref{sec:pan_size} 
for the details of these results.

\subsection{Lower Bounds for the Quasi B-Oracle}\label{sec:33}

Second, we show that our upper bound is tight under the random-partition property. 
Notice that in this case, if the coins include an odd number of false ones, 
then the scale is always tilted, and if the coins include an even number~(=$m$) 
of false ones, the scale will be balanced with probability $1/\sqrt{m}$. 
Thus in order to show a lower bound, we need to generalize the adversary method
that works for such ``stochastic'' oracles:  Now $\zeta(x;q)$ is a random variable
and the stochastic version of $O_x$, denoted by $\widetilde{O}_x$, is defined as 
(we should be careful not to lose its unitarity):
\[
\widetilde{O}_x|q,a,z\rangle=
\sqrt{\mathrm{Pr}[\zeta(x;q)=0]}|q,a,z\rangle +(-1)^a\sqrt{\mathrm{Pr}[\zeta(x;q)=1]}
|q,a\oplus 1,z\rangle.
\]

Now Lemma \ref{adv_boracle} changes to the following: 

\begin{lemma}\label{general_lower}
Let $w,w'$ denote a weight scheme as Lemma \ref{adv_boracle} except replacing Condition 2 to
\begin{itemize}
\item[2'] Every triple $(x,y,q)\in S\times S\times Q$ is assigned a nonnegative weight $w'(x,y,q)$ 
that satisfies $w'(x,y,q)=0$ whenever $\mathrm{Pr}[\zeta(x;q)=\zeta(y;q)]=1$ or $f(x)=f(y)$,
and $w'(x,y,q)w'(y,x,q)\geq w^2(x,y)$ for all $x,y,q$ such that $\mathrm{Pr}[\zeta(x;q)\neq\zeta(y;q)]>0$
and $f(x)\neq f(y)$. 
\end{itemize}
%For all $x,q$, let $\mu(x)=\sum_y w(x,y)$ and $\nu(x,q)=\sum_{y} w'(x,y,q)$.
Then, the quantum query complexity of $f$ is at least
\[
\Omega\left(
\max_{w,w'} \min_{\substack{x,y,q:\ w(x,y)>0, \\ \mathrm{Pr}[\zeta(x;q)\neq \zeta(y;q) ]>0 }}
\sqrt{\frac{\mu(x)\mu(y)}{\nu(x,q)\nu(y,q)}}\frac{1}{\sqrt{P_{01,q}}+\sqrt{P_{10,q}}}
\right),
\]
where $P_{ab,q}=\mathrm{Pr}[\zeta(x;q)=a]\mathrm{Pr}[\zeta(y;q)=b]$.
\end{lemma}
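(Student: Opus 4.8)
The plan is to rerun the proof of Zhang's weighted adversary bound (Lemma~\ref{adv_boracle}) essentially verbatim, modifying only the one step where the oracle enters. The key point is that, although $\zeta(x;q)$ is now a random variable, the transformation $\widetilde O_x$ is a bona fide unitary (this is exactly why its definition carries the phase $(-1)^a$), so the algorithm is still an ordinary quantum query algorithm with well-defined states $|\psi^t_x\rangle$ after $t$ queries on input $x$, where $|\psi^0_x\rangle$ is independent of $x$. Using the progress measure $W_t=\sum_{x,y}w(x,y)\langle\psi^t_x|\psi^t_y\rangle$, the initial value $W_0=\sum_{x,y}w(x,y)=\sum_x\mu(x)$ and the smallness of $W_T$ at the end (forced by $w(x,y)=0$ when $f(x)=f(y)$ together with correctness of the algorithm) are identical to the deterministic case. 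Hence everything reduces to re-estimating the per-query change $W_t-W_{t+1}=\sum_{x,y}w(x,y)\langle\psi^t_x|(I-\widetilde O_x^{\dagger}\widetilde O_y)|\psi^t_y\rangle$.

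Next I would analyze $I-\widetilde O_x^{\dagger}\widetilde O_y$ block by block in $q$. Since $\widetilde O_x$ fixes $q$ and the work register, on the answer qubit of each block it acts as a $2\times2$ matrix. Writing $p^x_a=\Pr[\zeta(x;q)=a]$, a direct computation gives $\widetilde O_x^{\dagger}\widetilde O_y=\left(\begin{smallmatrix}c&-s\\ s&c\end{smallmatrix}\right)$ with $c=\sqrt{p^x_0p^y_0}+\sqrt{p^x_1p^y_1}$, a rotation, so the block operator norm is $\|I-\widetilde O_x^{\dagger}\widetilde O_y\|=\sqrt{2(1-c)}$. The heart of the argument is the inequality $1-c\le(\sqrt{P_{01,q}}+\sqrt{P_{10,q}})^2$, which I would prove by setting $p^x_0=\cos^2\phi$ and $p^y_0=\cos^2\psi$ with $\phi,\psi\in[0,\pi/2]$, so that $c=\cos(\phi-\psi)$ and $\sqrt{P_{01,q}}+\sqrt{P_{10,q}}=\sin(\phi+\psi)$; the claim then becomes $\cos^2(\phi+\psi)\le\cos(\phi-\psi)$, which follows from $\cos(\phi-\psi)\ge|\cos(\phi+\psi)|\ge\cos^2(\phi+\psi)$ (the middle inequality because $\cos(\phi-\psi)\pm\cos(\phi+\psi)$ equals $2\cos\phi\cos\psi$ or $2\sin\phi\sin\psi$, both nonnegative). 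Thus the block norm is at most $\sqrt2\,(\sqrt{P_{01,q}}+\sqrt{P_{10,q}})$, which is the promised replacement for the constant factor ($=2$) that the bit-flipping deterministic oracle contributes in Zhang's proof, and it vanishes precisely when $\Pr[\zeta(x;q)=\zeta(y;q)]=1$, matching Condition~2$'$.

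With this block estimate I would decompose $|\psi^t_x\rangle=\sum_q|q\rangle|\phi_{x,q}\rangle$, bound each term by $|\langle\phi_{x,q}|(I-\widetilde O_x^{\dagger}\widetilde O_y)|\phi_{y,q}\rangle|\le\sqrt2\,(\sqrt{P_{01,q}}+\sqrt{P_{10,q}})\,\|\phi_{x,q}\|\,\|\phi_{y,q}\|$, and then carry out the same weight splitting $w(x,y)\le\sqrt{w'(x,y,q)\,w'(y,x,q)}$ (legitimate by Condition~2$'$) and Cauchy--Schwarz over the pairs $(x,y)$ that Zhang uses to produce the quantities $\nu(x,q)$. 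The net effect is $\max_t|W_t-W_{t+1}|=O\!\left(\max_{x,y,q}(\sqrt{P_{01,q}}+\sqrt{P_{10,q}})\sqrt{\nu(x,q)\nu(y,q)/(\mu(x)\mu(y))}\right)W_0$, where the maximum ranges over the triples with $w(x,y)>0$ and $\Pr[\zeta(x;q)\neq\zeta(y;q)]>0$. Dividing $W_0-W_T=\Omega(W_0)$ by this quantity and taking reciprocals turns the maximum into the stated minimum, yielding $\Omega\!\left(\min_{x,y,q}\frac{1}{\sqrt{P_{01,q}}+\sqrt{P_{10,q}}}\sqrt{\mu(x)\mu(y)/(\nu(x,q)\nu(y,q))}\right)$.

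The main obstacle is bookkeeping rather than conceptual: I must check that the block-dependent prefactor $\sqrt{P_{01,q}}+\sqrt{P_{10,q}}$ survives the Cauchy--Schwarz step and lands inside the minimization in exactly the multiplicative position claimed, rather than being averaged over blocks or pulled out as a uniform constant. A secondary subtlety, absent for the index oracle, is that $\widetilde O_x^{\dagger}\widetilde O_y$ is a genuine rotation with a nonzero diagonal deviation $c-1$ (the deterministic index oracle instead contributes a pure off-diagonal swap). I would dispose of this simply by bounding the entire block operator by its spectral norm $\sqrt{2(1-c)}$, which the key inequality already controls, so the diagonal part requires no separate treatment.
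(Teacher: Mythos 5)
Your proposal is correct and follows essentially the same route as the paper: both re-run Zhang's weighted-adversary argument verbatim and localize all changes to the per-query progress estimate, which in both cases yields the extra factor $\sqrt{P_{01,q}}+\sqrt{P_{10,q}}$ that is then carried through Zhang's Cauchy--Schwarz step unchanged. The only (minor, presentational) difference is that you obtain the per-block bound from the spectral norm of the rotation $I-\widetilde{O}_x^{\dagger}\widetilde{O}_y$ via the inequality $1-c\le(\sqrt{P_{01,q}}+\sqrt{P_{10,q}})^2$, whereas the paper expands $\langle\psi_x^{k}|\psi_y^{k}\rangle$ coefficient-wise and uses the weaker (but equally sufficient) estimate $1-\sqrt{P_{00,q}}-\sqrt{P_{11,q}}\le\sqrt{P_{01,q}}+\sqrt{P_{10,q}}$.
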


\begin{proof}
%\subsection{Proof of Lemma \ref{general_lower}}\label{sec:general_lower}
The proof follows that of \cite[Theorem 14]{Zha04} essentially; 
in the following we mainly describe the difference.  
Assume that there is a $T$-query quantum algorithm ${\cal A}$ computing $f$ with high probability. 
Note that the initial state of ${\cal A}$ is $|\psi_x^0\rangle=|0\rangle$ 
for any input $x$. The final state for input $x$ can be written as 
$|\psi_x^{T}\rangle=U_{T-1}\widetilde{O}_x \cdots U_1\widetilde{O}_xU_0|0\rangle$ 
for some unitary transformations $U_0,\ldots,U_{T-1}$.
Since ${\cal A}$ computes $f$ with high probability, there is some constant $\epsilon<1$ 
such that $|\langle\psi_x^T|\psi_y^T\rangle|\leq \epsilon$ for any $x$ and $y$ with $f(x)\neq f(y)$. 
Let $|\psi_x^k\rangle=U_{k-1}\widetilde{O}_x\cdots U_1\widetilde{O}_x U_0|0\rangle$. 
For any $x$ and $y$ with $f(x)\neq f(y)$, we can 
represent
\[
|\psi_x^{k-1}\rangle = \sum_{q,a,z}\alpha_{q,a,z}|q,a,z\rangle,\quad\quad\quad
|\psi_y^{k-1}\rangle = \sum_{q,a,z}\beta_{q,a,z}|q,a,z\rangle.
\]
After querying to the oracle, we have
\begin{align*}
\widetilde{O}_x|\psi_x^{k-1}\rangle &= 
\sum_{q,a,z}\alpha_{q,a,z}( {\sqrt{\mathrm{Pr}[\zeta(x;q)=0]}
|q,a,z\rangle + (-1)^a\sqrt{\mathrm{Pr}[\zeta(x;q)=1 ]}|q,a\oplus 1,z\rangle} )\\
  &= \sum_{q,a,z}( \sqrt{\mathrm{Pr}[\zeta(x;q)=0] }
\alpha_{q,a,z} +(-1)^{a\oplus 1}\sqrt{\mathrm{Pr}[\zeta(x;q)=1] }\alpha_{q,a\oplus 1,z})
|q,a,z\rangle,\\
\widetilde{O}_y|\psi_y^{k-1}\rangle 
\ignore{
&= \sum_{q,a,z} \beta_{q,a,z} \{ \sqrt{\mathrm{Pr}[\zeta(y;q)=0]}
|q,a,z\rangle + (-1)^a\sqrt{\mathrm{Pr}[\zeta(y;q)=1]}|q,a\oplus 1,z \rangle \} \\
}
 &= \sum_{q,a,z} ( \sqrt{\mathrm{Pr}[\zeta(y;q)=0]}\beta_{q,a,z} 
+ (-1)^{a \oplus 1} \sqrt{\mathrm{Pr}[\zeta(y;q)=1]}\beta_{q,a\oplus 1,z} ) 
|q,a,z\rangle.
\end{align*}
Hence we have (recall that $P_{ab,q}:=\mathrm{Pr}[\zeta(x;q)=a]\mathrm{Pr}[\zeta(y;q)=b]$):
\begin{align*}
\langle \psi_x^{k} | \psi_y^{k} \rangle 
 &= \sum_{q,a,z}\sqrt{P_{00,q}}\alpha^*_{q,a,z}\beta_{q,a,z}
+\sum_{q,a,z}\sqrt{P_{11,q}}\alpha^*_{q,a\oplus 1,z} \beta_{q,a\oplus 1,z}\\
& \ \ + \sum_{q,a,z}(-1)^{a\oplus 1}\sqrt{P_{01,q}}\alpha^*_{q,a,z}\beta_{q,a\oplus 1,z} 
+ \sum_{q,a,z}(-1)^{a\oplus 1}\sqrt{P_{10,q}}\alpha^*_{q,a\oplus 1,z} \beta_{q,a,z}\\
&= \sum_{q,a,z}\sqrt{P_{00,q}}\alpha^*_{q,a,z}\beta_{q,a,z}
+\sum_{q,a,z}\sqrt{P_{11,q}}\alpha^*_{q,a,z} \beta_{q,a,z}\\
& \ \ + \sum_{q,a,z}(-1)^{a\oplus 1}\sqrt{P_{01,q}}\alpha^*_{q,a,z}\beta_{q,a\oplus 1,z} 
+ \sum_{q,a,z}(-1)^{a}\sqrt{P_{10,q}}\alpha^*_{q,a,z} \beta_{q,a\oplus 1,z}.
\end{align*}
 On the contrary, 
\[
\langle \psi_x^{k-1} | \psi_y^{k-1} \rangle = 
\sum_{q,a,z}\alpha^*_{q,a,z} \beta_{q,a,z}.
\]
Thus the difference between $\langle \psi_x^{k-1} | \psi_y^{k-1} \rangle$ 
and $\langle \psi_x^{k} | \psi_y^{k} \rangle$ is 
\begin{align*}
\langle \psi_x^{k-1} | \psi_y^{k-1} \rangle - \langle \psi_x^k | \psi_y^k \rangle 
&= \!\!\!\!\!\!\!\!\!\!\!\!\!\!\!\!
\sum_{q,a,z:\mathrm{Pr}[\zeta(x;q)\neq\zeta(y;q)]>0 } 
\left[ (1-\sqrt{P_{00,q}}-\sqrt{P_{11,q}} )\alpha^*_{q,a,z} \beta_{q,a,z} \right. \\
& \left. + (-1)^{a}
( \sqrt{P_{01,q}}\alpha^*_{q,a,z} \beta_{q,a\oplus 1,z} - \sqrt{P_{10,q}}\alpha^*_{q,a,z} 
\beta_{q,a\oplus 1,z} ) \right]
\end{align*}
since $\mathrm{Pr}[\zeta(x;q)=\zeta(y;q)]=1$, 
that is, $P_{00,q}+P_{11,q}=1$ implies that $P_{00,q}=1$ or $P_{11,q}=1$. 
By the triangle inequality,
\begin{align*}
1-\epsilon
&\leq 1-|\langle \psi_x^T | \psi_y^T \rangle | \leq \sum_{k=1}^{T}|\langle \psi_x^{k-1} | \psi_y^{k-1} \rangle - \langle \psi_x^k | \psi_y^k \rangle |\\
\ignore{ & \leq & \sum_{q,a,z:\mathrm{Pr}[\zeta(x;q)\neq\zeta(y;q)]>0} \left[ (1-\sqrt{P_{00,q}}-\sqrt{P_{11,q}})|\alpha_{q,a,z}||\beta_{q,a,z}| + \sqrt{P_{01,q}}|\alpha_{q,a,z}||\beta_{q,a\oplus 1,z}|+\sqrt{P_{10,q}}|\alpha_{q,a\oplus 1,z}||\beta_{q,a,z}|  \right]\\}
& \leq \sum_{k=1}^T  \sum_{\substack{q,a,z\\ \mathrm{Pr}[\zeta(x;q) \neq \zeta(y;q)]>0}} 
\!\!\!\!\!\!\!\!\!\!\!\!\!\! \left[ (1-\sqrt{P_{00,q}}-\sqrt{P_{11,q}} )|
\alpha_{q,a,z}||\beta_{q,a,z}| + (\sqrt{P_{01,q}}+\sqrt{P_{10,q}} )
|\alpha_{q,a,z}||\beta_{q,a\oplus 1,z}| \right]\\
& \leq \sum_{k=1}^T \sum_{\substack{q,a,z\\ \mathrm{Pr}[\zeta(x;q)\neq\zeta(y;q)]>0} } 
\!\!\!\!\!\!\!\!\!\!\!\!\!\!\![ (\sqrt{P_{01,q}}+\sqrt{P_{10,q}} )(|\alpha_{q,a,z}||\beta_{q,a,z}| 
+ |\alpha_{q,a,z}||\beta_{q,a\oplus 1,z}|) ].
\end{align*}

The remaining part is completely similar to the proof of \cite[Theorem 14]{Zha04}. 
Summing up the inequalities for all $(x,y)\in S\times S$ with weight $w(x,y)$, 
we have $(1-\epsilon)\sum_{x,y}w(x,y)\leq 2T\frac{1}{\sqrt{A}} \sum_{x,y}w(x,y)$ 
where 
\[
A=\min_{\substack{x,y,q:\ w(x,y)>0 \\ \mathrm{Pr}[\zeta(x;q) \neq \zeta(y;q)]>0}}
\frac{\mu(x)\mu(y)}{\nu(x,q)\nu(y,q)}\frac{1}{(\sqrt{P_{01,q}}+\sqrt{P_{10,q}})^2}.
\]
Therefore, we obtain $T=\Omega(\sqrt{A})$ and hence the proof is completed.
\end{proof}

Now we define the stochastic version of our B-oracle by setting 
\[
\mathrm{Pr}[\zeta(x;q)=0]=\left\{
\begin{array}{ll}
0          & (\mbox{if}\ wt(x\wedge q)\ \mbox{is odd})\\
\sqrt{1/wt(x\wedge q)} & (\mbox{if}\ wt(x\wedge q) \mbox{ is positive
and even})\\
1          & (\mbox{if}\ wt(x\wedge q)=0),
\end{array}
\right.
\]
where $x$ and $q$ are $N$-bit strings, and $x\wedge q$ is the $N$-bit string 
obtained by the bitwise AND of $x$ and $q$.  We call this oracle
the {\em quasi B-oracle} and one can see that it simulates the
B-oracle with the random-partition property. 
% mentioned at the beginning of Section \ref{sec:31}.  
Now we are ready to give the upper and lower bounds for the query complexity of this quasi B-oracle. 
Assume that $wt(x)=k$. The upper bound is easy by modifying Theorem \ref{main_theorem} 
so that Step 2 in $Find^*(k)$ can be replaced with $O(k^{1/4})$ repetitions of the quasi B-oracle. 

\begin{theorem}\label{quasi_upper} 
There is an $O(k^{1/4})$-query quantum algorithm to find $x$ using the quasi B-oracle. 
\end{theorem}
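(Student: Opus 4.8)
The plan is to reuse the entire outer skeleton of $Find(k)$ from the proof of Theorem \ref{main_theorem}, changing only the inner parity-detection step. Recall that $Find^*(k)$ first applies the unitary $W$ of Lemma \ref{parity-restricted} to obtain $\frac{1}{\sqrt{2^{N-1}}}\sum_{\widetilde{q}\in Q_{even}}|\widetilde{q}\rangle_{\sf R}$, then for each superposed $\widetilde{q}$ installs the phase $(-1)^{\widetilde{q}\cdot x}=(-1)^{wt(\widetilde{q}\wedge x)}$ on $|\widetilde{q}\rangle_{\sf R}$, and finally applies $W^{-1}$ and measures ${\sf R}$ to read off $x$. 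The only place the genuine B-oracle entered was Step 2, where a partition superposition ${\cal A}_{\widetilde{q}}|0\rangle$ was built on ${\sf R}'$ and amplitude amplification searched for a balanced partition. My plan is to discard the register ${\sf R}'$ and the preparation ${\cal A}_{\widetilde{q}}$ altogether, and drive the amplitude amplification directly by the stochastic oracle $\widetilde{O}_x$, which already performs the random partition internally.

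Concretely, for each control value $\widetilde{q}$ write $m=wt(\widetilde{q}\wedge x)$ and regard the answer bit $a$ as the search register. Applying $\widetilde{O}_x$ with query $\widetilde{q}$ to $|a=0\rangle$ produces a state whose amplitude on $a=0$ is $\sqrt{\mathrm{Pr}[\zeta(x;\widetilde{q})=0]}$, so the probability of the ``balanced'' outcome $a=0$ equals $\mathrm{Pr}[\zeta(x;\widetilde{q})=0]$, which is $0$ when $m$ is odd, $1$ when $m=0$, and $1/\sqrt{m}\geq 1/\sqrt{k}$ when $m$ is even and positive. I would therefore take the good event to be $a=0$ and run ${\bf QSearch}$ (Theorem 3 in \cite{BHMT02}, the unknown-probability version) with state preparation ${\cal A}=\widetilde{O}_x$ acting on $|a=0\rangle$. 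Because the marking reflection (is $a=0$?) is a query-free phase flip on the answer register, each Grover iterate spends only one $\widetilde{O}_x$ and one $\widetilde{O}_x^{-1}$, so the expected number of quasi B-oracle queries to find a balanced outcome is $O(1/\sqrt{1/\sqrt{m}})=O(m^{1/4})=O(k^{1/4})$ whenever $m$ is even. Capping the iteration count at $c_0k^{1/4}$ makes every even-parity branch succeed with probability $\geq 1599/1600$, while every odd-parity branch (which has no balanced outcome) necessarily fails. Mirroring Steps 2.3--2.4 of $Find^*(k)$, I flip the phase exactly on the failing branches and then uncompute, which kicks back the sign $(-1)^{m}$ onto $|\widetilde{q}\rangle_{\sf R}$ up to the usual residual error term.

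The error analysis is then a verbatim repetition of Lemma \ref{length}: the uncomputed residue coming from the even-parity branches has squared magnitude $\epsilon\leq 1/400$, so after $W^{-1}$ the register ${\sf R}$ collapses to $x$ with probability at least $9/10$, giving a bounded-error $O(k^{1/4})$-query algorithm for the quasi B-oracle. To make it exact I would finish exactly as in $Find(k)$: tune the success probability to $1/4$ and amplify it to $1$ by exact amplitude amplification (Theorem 4 in \cite{BHMT02}), using the classical $O(\log k)$-weighing subroutine $Check$ to confirm that the $k$ reported coins are all false. The symmetry argument justifying exactness depends only on the unchanged outer loop, so it carries over unaltered, and the total cost stays $O(k^{1/4})$.

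The step I expect to require the most care is reducing the inner routine to ordinary amplitude amplification over the \emph{stochastic} oracle $\widetilde{O}_x$: one must check that $\widetilde{O}_x$ acts as a genuine unitary rotation in each two-dimensional block $\{|\widetilde{q},0,z\rangle,|\widetilde{q},1,z\rangle\}$ (with $\cos\theta=\sqrt{\mathrm{Pr}[\zeta=0]}$), so that ${\bf QSearch}$ applies with no modification; that the good-subspace reflection costs no queries; and that it is precisely the idealized balanced probability $1/\sqrt{m}$ that yields the $O(k^{1/4})$ iteration bound. Everything else is a transcription of the proof of Theorem \ref{main_theorem}.
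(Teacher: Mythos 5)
Your proposal is correct and matches the paper's (one-sentence) proof exactly: the paper likewise replaces Step 2 of $Find^*(k)$ by $O(k^{1/4})$ applications of the quasi B-oracle, amplifying the ``balanced'' amplitude $\sqrt{1/\sqrt{m}}$ on the answer bit, and your verification that $\widetilde{O}_x$ is a two-dimensional rotation with $\cos\theta=\sqrt{\Pr[\zeta=0]}$ is the right point to check. The only caveat is your final exactness addendum, which the theorem does not require and which does not quite carry over: under the quasi B-oracle the subroutine $Check$ is no longer deterministic (a wrong candidate leaving an even number $m>0$ of false coins on the pans still reads ``balanced'' with probability $1/\sqrt{m}$), so the marking step of exact amplitude amplification is not a clean projection onto the correct solution.
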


On the contrary, we can obtain the tight lower bound by using Lemma \ref{general_lower}. 
The weight scheme contrasts with that of Theorem \ref{bigpan}; 
$w(x,y)$ is nonzero only if the Hamming distance between $x$ and $y$ is $2$. 
%See Appendix \ref{sec:quasi_lower} for the proof. 

\begin{theorem}\label{quasi_lower}
Any quantum algorithm with the quasi B-oracle needs $\Omega(k^{1/4})$ queries to find $x$.
\end{theorem}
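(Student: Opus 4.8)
The plan is to apply Lemma~\ref{general_lower} to the quasi B-oracle with a weight scheme tailored to the stochastic behavior of the oracle. Following the hint that $w(x,y)$ should be nonzero only when the Hamming distance between $x$ and $y$ equals $2$, I would set $S=\{x\in\{0,1\}^N\mid wt(x)=k\}$, $f(x)=x$, and $w(x,y)=1$ precisely when $x$ and $y$ differ in exactly two positions (one where $x$ has a $1$ and $y$ has a $0$, and one vice versa), and $w(x,y)=0$ otherwise. Such a pair $x,y$ is obtained from $x$ by moving a single false coin from position $i$ (where $x_i=1,y_i=0$) to position $j$ (where $x_j=0,y_j=1$). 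For $w'$ I would again take $w'(x,y,q)=1$ whenever $w(x,y)=1$ and $\mathrm{Pr}[\zeta(x;q)\neq\zeta(y;q)]>0$, which satisfies Condition~$2'$ since $w'(x,y,q)w'(y,x,q)=1=w^2(x,y)$.

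First I would compute $\mu(x)=\sum_y w(x,y)$. Since each neighbor is obtained by choosing one of the $k$ ones to move and one of the $N-k$ zeros to move it to, we get $\mu(x)=k(N-k)$ for every $x$, so the factor $\mu(x)\mu(y)=k^2(N-k)^2$ is uniform. Next I would bound $\nu(x,q)=\sum_y w'(x,y,q)$, the number of distance-$2$ neighbors $y$ of $x$ for which the oracle answer can differ. The key observation is that moving a single false coin from $i$ to $j$ changes $wt(x\wedge q)$ by at most one, and the answer distributions differ only when this move changes the parity of the number of false coins among the queried positions $I(q):=\{i\mid q_i=1\}$. This happens exactly when precisely one of $i,j$ lies in $I(q)$; in that case $wt(x\wedge q)$ and $wt(y\wedge q)$ have opposite parities, so one of them yields a balanced scale with nonzero probability and the other is always tilted. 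Counting these neighbors and bounding the crucial factor $1/(\sqrt{P_{01,q}}+\sqrt{P_{10,q}})$ against $\nu(x,q)\nu(y,q)$ is the heart of the argument.

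The main obstacle, and where I expect the real work to lie, is controlling the product $\nu(x,q)\nu(y,q)(\sqrt{P_{01,q}}+\sqrt{P_{10,q}})^2$ uniformly over all queries $q$ and all witness pairs $(x,y)$. For the quasi B-oracle, when $wt(x\wedge q)=m$ is positive and even, $\mathrm{Pr}[\zeta(x;q)=0]=m^{-1/2}$, so one of $P_{01,q},P_{10,q}$ carries a factor like $m^{-1/4}$ while the other vanishes (since the partner has odd weight and is always tilted). Thus $\sqrt{P_{01,q}}+\sqrt{P_{10,q}}\approx m^{-1/4}$, contributing the fourth-root that ultimately produces the $k^{1/4}$ bound. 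I would therefore split the analysis by the value of $m=wt(x\wedge q)$: large $m$ (of order $k$) forces the small $m^{-1/4}$ probability factor but may also allow many witness neighbors, while small $m$ gives a large probability factor but restricts $\nu$. The delicate balance is to show that across all regimes the quantity inside the $\min$ in Lemma~\ref{general_lower} is $\Omega(\sqrt{k})$ before the probability factor and that the probability factor contributes $\Omega(k^{1/4})$, so that their combination is bounded below by $\Omega(k^{1/4})$. Concretely, I would argue that for the worst-case $q$ with even $m$, we have $\nu(x,q)\nu(y,q)=O(k^2(N-k)^2/m^{1/2})$ or a comparable bound, so that $\mu(x)\mu(y)/(\nu(x,q)\nu(y,q))=\Omega(\sqrt{m})$ and, multiplied by $(\sqrt{P_{01,q}}+\sqrt{P_{10,q}})^{-2}=\Omega(\sqrt{m})$, yields $\Omega(m)\geq\Omega(1)$ only — so the tight estimate must instead come from carefully matching $\nu$ to the probability factor rather than bounding them independently. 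Getting these two factors to combine correctly, rather than being bounded separately and losing the fourth root, is precisely the calculation I expect to require the most care.
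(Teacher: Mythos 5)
Your setup matches the paper's: same $S$, same $f(x)=x$, same choice $w(x,y)=1$ iff $d(x,y)=2$, the same computation $\mu(x)=k(N-k)$, and the same observation that the answer distributions differ essentially when the single-coin move flips the parity of $wt(x\wedge q)$. However, the step you flag as ``where I expect the real work to lie'' is exactly where your proposal breaks, and the specific choice you commit to, $w'\equiv 1$, provably does not yield the theorem. With uniform $w'$ one gets $\nu(x,q)=m_1(N-k-l+m_1)+(k-m_1)(l-m_1)$ where $l=wt(q)$ and $m_1=wt(x\wedge q)$, and already for the pair $m_1=2$, $m_2=1$ with $l=\Theta(N)$ and $k=\Theta(N)$ this is $\Theta(kl)=\Theta(N^2)$ for both $x$ and $y$, so $\mu(x)\mu(y)/(\nu(x,q)\nu(y,q))=\Theta(1)$ while the probability factor $(\sqrt{P_{01,q}}+\sqrt{P_{10,q}})^{-2}=\sqrt{2m_1}$ is also $\Theta(1)$; Lemma~\ref{general_lower} then gives only $\Omega(1)$. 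Your own closing computation (``yields $\Omega(m)\geq\Omega(1)$ only'') detects this, but the fix is not a more careful joint estimate of the two factors for the uniform scheme --- no such estimate can rescue it, since the adversary quantity genuinely degenerates at small even $m$.

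The missing idea is that $w'$ must be chosen \emph{asymmetrically}, as a function $w'(m_1,m_2)$ of the pair $(wt(x\wedge q),wt(y\wedge q))$. The paper sets $w'(2m,2m-1)=\frac{(l-2m+1)(k-2m+1)}{2m(N-k-l+2m)}$ and $w'(2m-1,2m)$ equal to its reciprocal (with $w'=1$ on the odd-to-even transitions $(2m,2m+1)$, $(2m+1,2m)$, and $w'=0$ elsewhere, in particular on equal-weight pairs, which your blanket rule would incorrectly include and which would further degrade the bound since their probability factor is $\Theta(1)$). The reciprocal condition preserves $w'(x,y,q)w'(y,x,q)\geq w^2(x,y)$, but it rebalances the two sums so that $\nu(x,q)=O((l-2m+1)(k-2m+1))$ and $\nu(y,q)=O(m(N-k-l+2m))$, giving $\nu(x,q)\nu(y,q)=O(m(l-2m+1)(k-2m+1)(N-k-l+2m))$. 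Combined with the factor $\sqrt{2m}$, the quantity in Lemma~\ref{general_lower} becomes $\frac{k^2(N-k)^2\sqrt{2m}}{8m(l-2m+1)(k-2m+1)(N-k-l+2m)}=\Omega(k^{1/2})$ uniformly over $m\leq k/2$ and $l\leq N$ (the $1/m$ in the denominator beats the $\sqrt{m}$ in the numerator precisely when $m$ is small, which is the regime where your scheme fails), and the theorem follows. Without this rebalancing device your argument cannot be completed.
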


\begin{proof}
%\subsection{Proof of Theorem \ref{quasi_lower}}\label{sec:quasi_lower}
First we define a weight scheme. Let $S=\{x\in\{0,1\}^N\mid wt(x)=k\}$  
and $f(x)=x$. In what follows, we assume that $wt(q)=l$ for the $q$ 
that provides the minimum value of the formula of Lemma \ref{general_lower} 
and show that the theorem holds for an arbitrary $l \leq N$. 
%In what follows, we assume that $wt(q)=l$ for arbitrarily taken $l\leq N$. 
For any $(x,y)\in S\times S$, let $w(x,y)=1$ if ${d}(x,y)=2$ and $0$ otherwise.
We must satisfy $w'(x,y,q)=0$ for any different $x,y$ such that ${d}(x,y)\neq 2$ or 
$\mathrm{Pr}[\zeta(x;q)=\zeta(y;q)]=1$, which implies $wt(x\wedge q)=wt(y\wedge q)$. 
Thus we let $w'(x,y,q)\neq 0$ only if ${d}(x,y)=2$ and $wt(x\wedge q)=wt(y\wedge q)\pm 1$.
Define $w'(x,y,q)$ as a function of $wt(x\wedge q)=m_1$ and $wt(y\wedge q)=m_2$, 
and thus denote it by $w'(x,y,q)=w'(m_1,m_2)$. Then $w'(m_1,m_2)$ is taken as
\[
w'(m_1,m_2)=
\left\{
\begin{array}{ll}
\frac{2m(N-k-l+2m)}{(l-2m+1)(k-2m+1)}  &\quad \mbox{if }(m_1,m_2)=(2m-1,2m) \\
\frac{(l-2m+1)(k-2m+1)}{2m(N-k-l+2m)}  &\quad \mbox{if }(m_1,m_2)=(2m,2m-1) \\
1  &\quad \mbox{if }(m_1,m_2)=(2m,2m+1),(2m+1,2m)\\
0  &\quad \mbox{otherwise}.
\end{array}
\right.
\]
It can be easily seen that $w,w'$ is a weight scheme. 
Now we evaluate the lower bound under this weight scheme.
Clearly, $\mu(x)=\mu(y)=k(N-k)$. For evaluating $\nu(x,q)\nu(y,q)$, 
we consider only the case where $m_1=wt(x\wedge q)=2m$ and $m_2=wt(y\wedge q)=2m-1$ 
(the other cases such as $m_1=2m$ and $m_2=2m+1$ can be similarly analyzed).
In this case, we have
\begin{align*}
\nu(x,q)
&=2m(N-l-k+2m)w'(2m,2m-1)+(k-2m)(l-2m)w'(2m,2m+1)\\
&\leq (l-2m+1)(k-2m+1)+(k-2m)(l-2m) \\ 
&\leq 2(l-2m+1)(k-2m+1),\\
\nu(y,q)
&=(2m-1)(N-k-l+2m-1)w'(2m-1,2m-2)\\
&\ + (k-2m+1)(l-2m+1)w'(2m-1,2m)\\
&\leq (2m-1)(N-k-l+2m-1) + 2m(N-k-l+2m) \\
&\leq 4m(N-k-l+2m).
\end{align*}
Note that since $\mathrm{Pr}[\zeta(x;q)=0]=\sqrt{1/2m}$ and $\mathrm{Pr}[\zeta(y;q)=1]=1$,
$P_{01,q}=1/\sqrt{2m}$ and $P_{10,q}=0$. Thus we have
\[
\frac{\mu(x)\mu(y)}{\nu(x,q)\nu(y,q)}
\frac{1}{(\sqrt{P_{01,q}}+\sqrt{P_{10,q}})^2}
=
\frac{k^2(N-k)^2\sqrt{2m}}{8m(l-2m+1)(k-2m+1)(N-k-l+2m)}.
\] 
This value is bounded below by $\Omega(k^{1/2})$ since 
$m\leq k/2$ and $l\leq N$. Now Lemma~\ref{general_lower} completes the proof.
\end{proof}

\

\noindent
{\bf Acknowledgements.} We are grateful to Mario Szegedy for 
directing our interest to the topic of this paper, and an anonymous 
referee for a helpful idea to improve the earlier upper bounds for 
general $k$ significantly. We are also grateful to Seiichiro Tani and Shigeru Yamashita 
for helpful discussions.   

%\vspace{-2mm}

\appendix

%\section{Appendix}

\section{Efficient Construction of Transformation $W$}\label{make_w}
It can be easily seen that our algorithm $Find^*(k)$ can be implemented 
in time polynomial in the length of the input except for a bit nontrivial task, 
constructing the transformation $W$. Precisely, $W$ is a unitary transformation that satisfies
$W|x\rangle=|\psi_x\rangle:=\frac{1}{\sqrt{2^{N-1}}}\sum_{\widetilde{q}\in Q_{even}}
(-1)^{\widetilde{q}\cdot x}|\widetilde{q}\rangle$ for any $x\in S_{<N/2}$.
We define a subset $S_{lh}$ of size $2^N/2$ as follows: $S_{lh}=S_{<N/2}$ if $N$ is odd,
or $S_{lh}=S_{<N/2}\cup\{x\in\{0,1\}^{N/2}\mid lex(x)\leq 2^{N/2}/2\}$
(where $lex(x)$ is the lexicographic order of $x$ in $\{0,1\}^{N/2}$) if $N$ is even.
Notice that $S_{lh}$ is a polynomial-time computable set.
Then the following algorithm implements $W$.
 
\
 
\noindent
{\bf Algorithm $A_W$.} Input: $|x\rangle$ such that $wt(x)<N/2$ in a register ${\sf S}$.
 
1. Create the quantum state $\frac{1}{\sqrt{2}}(|x\rangle+|\bar{x}\rangle)$ in ${\sf S}$
by Steps 1.1--1.3.
 
\hspace{0.5cm} 1.1. Prepare $\frac{1}{\sqrt{2}}(|0\rangle+|1\rangle)$ in a register ${\sf R}$.
 
\hspace{0.5cm} 1.2. If the content of ${\sf R}$ is $1$, flip all the $N$ bits in ${\sf S}$.
 
\hspace{0.5cm} 1.3. If the content of ${\sf S}$ is not in $S_{lh}$, flip the bit in ${\sf R}$.
 
2. Apply the Hadamard transform $H$ on ${\sf S}$.
 
3. Let ${\sf S}$ be the output.
 
\
 
It is easy to see that $A_W$ is implemented in polynomial time.
By Step 1.1, we have $\frac{1}{\sqrt{2}}|x\rangle_{\sf S}(|0\rangle+|1\rangle)_{\sf R}$.
After Step 1.2, the state becomes $\frac{1}{\sqrt{2}}
(|x\rangle_{\sf S}|0\rangle_{\sf R}+|\bar{x}\rangle_{\sf S}|1\rangle_{\sf R})$.
Step 1.3 transforms the state to
\[
\frac{1}{\sqrt{2}}(|x\rangle_{\sf S}|0\rangle_{\sf R}+|\bar{x}\rangle_{\sf S}|0\rangle_{\sf R})
=
\frac{1}{\sqrt{2}}(|x\rangle_{\sf S}+|\bar{x}\rangle_{\sf S})|0\rangle_{\sf R}.
\]
Finally, the state after Step 2 is
\[
H\left(\frac{1}{\sqrt{2}}(|x\rangle_{\sf S}+|\bar{x}\rangle_{\sf S})\right)|0\rangle_{\sf R}
=|\psi_x\rangle_{\sf S}|0\rangle_{\sf R}
\]
as shown in the proof of Lemma \ref{parity-restricted}.

\section{Algorithm $Find(k)$}\label{sec:find_k}

The exact algorithm $Find(k)$ is given as follows.

\

\noindent
{\bf Algorithm $Find(k)$.} 
Let $a~(\geq 9/10)$ be the success probability of $Find^*(k)$. 
Let~${\cal B}$ be the algorithm that uses a single qubit with initial state $|0\rangle$ 
and rotates it to $\sqrt{1-1/4a}|0\rangle+\sqrt{1/4a}|1\rangle$.
Notice that the probability that $Find^*(k)$ succeeds and ${\cal B}$ outputs $|1\rangle$ is exactly $1/4$.

(i) Run $Find^*(k)$ with initial state $|0\rangle_{\sf R}$ 
in the register ${\sf R}$ and obtain a candidate of $k$ false coins $X$ 
(in fact, the corresponding oracle), and also run ${\cal B}$ with initial state $|0\rangle_{\sf R'}$ 
in the register ${\sf R'}$. Let $U$ be the unitary transformation done in this step 
(that is, the state after this step is $U|0\rangle_{\sf R}|0\rangle_{\sf R'}$). 

(ii) Implement Steps (ii-1)--(ii-3) below.

\hspace{0.5cm}
(ii-1) Run algorithm $Check$, which will be described later, to check if $X$ is indeed the set of $k$ false coins.

\hspace{0.5cm} 
(ii-2) If $Check$ outputs YES and ${\cal B}$ outputs $|1\rangle$, flip the phase. 
Otherwise, do nothing. 

\hspace{0.5cm} 
(ii-3) Reverse the operation of Step (ii-1).

(iii) Apply the reflection about the state $U|0\rangle_{\sf R}|0\rangle_{\sf R'}$, i.e., 
$I-2U|0\rangle\langle 0|U^\dagger$, where $|0\rangle=|0\rangle_{\sf R}|0\rangle_{\sf R'}$, 
to the state.

(iv) Measure ${\sf R}$ in the computational basis.

\

By a geometric view (Theorem 4 in \cite{BHMT02}) similar to the Grover search where the fraction of correct solution(s) 
is $1/4$ \cite{BBHT98}, we can verify that $Find(k)$ succeeds with certainty.
In $Find(k)$, the ``solution'' is $|{X}\rangle_{\sf R}|1\rangle_{\sf R'}$
where ${X}$ is the $k$ false coins. Notice that Step (ii) implements the transformation
that changes $|{X}\rangle_{\sf R}|b\rangle_{\sf R'}$ to $-|{X}\rangle_{\sf R}|b\rangle_{\sf R'}$
if $({X},b)$ is the ``solution'' and $|{X}\rangle_{\sf R}|b\rangle_{\sf R'}$ otherwise.
The total complexity is the number of queries to run $Find^*(k)$ 
and its inverse three times (once for Step (i) and twice for Step (iii)) 
plus the number of queries to run $Check$ and its inverse. 
So, we obtain a query complexity of $O(k^{1/4})$ if $Check$ has a similar complexity. 

In fact, $Check$ needs only $O(\log k)$ queries, which is given as follows. 
For simplicity, we assume that $N$ is a multiple of $k+1$ and $k+1$ is a power of $2$ 
but the generalization is easy. (Note that the following algorithm satisfies 
the big-pan property. If we do not care the property, the algorithm can be simplified a lot.)

\

\noindent
{\bf Algorithm $Check$.}

Input: Two subsets of a set $X$ of $N$ coins, $X_1$ with size $k$ and $\overline{X}_1=X\setminus X_1$ with size $N-k$.

Output: YES iff the coins in $X_1$ are all false and the coins in $\overline{X}_1$ are all fair. 
 
1. Divide $\overline{X}_1$ into $k+1$ equal-sized subsets $Y_1,Y_2,\ldots,Y_{k+1}$ (recall the above assumption).

2. Let $L=Y_1$ and $R=Y_2$. For $i=1$ to $\log{(k+1)}$, repeat Steps 2.1--2.2.

\hspace{0.5cm} 2.1. Check if $L$ and $R$ are balanced by Steps 2.1.1--2.1.3. 

\hspace{0.8cm} 2.1.1. Construct arbitrarily two subsets $L'$ and $R'$ of size $N/4-|L|$ ($=N/4-|R|$) 
 from $X\setminus(X_1\cup L\cup R)$ (this is possible since 
$|X\setminus(X_1\cup L\cup R)|\geq N-k-|L|-|R|\geq(N/4-|L|)+(N/4-|R|)$). 

\hspace{0.8cm} 2.1.2. Compare $L\cup L'$ and $R\cup R'$ by a scale. If it is tilted, output NO. 

\hspace{0.8cm} 2.1.3. Compare $R\cup L'$ and $L\cup R'$ by a scale. If it is tilted, output NO. 

\hspace{0.5cm} 2.2. Set $L:=L \cup R$ and $R:=\bigcup_{j=2^{i}+1}^{2^{i+1}} Y_j$. 

3. Output YES.

\

Obviously, $Check$ makes $O(\log k)$ queries. The correctness of $Check$ can be seen as follows: 
Observe that (i) if $L'$ and $R'$ are of different weight, at least one of Steps 2.1.2 and 2.1.3 is tilted, 
and (ii) if $L'$ and $R'$ are of the same weight, then both of Steps 2.1.2 and 2.1.3 are balanced 
if and only if $L$ and $R$ are of the same weight. Hence the algorithm essentially verifies 
if $Y_1$ and $Y_2$ are of the same weight, $Y_1\cup Y_2$ and $Y_3\cup Y_4$ 
are of the same weight, $Y_1\cup\cdots\cup Y_4$ and $Y_5\cup\cdots \cup Y_8$ 
are of the same weight, and so on. If all the tests go through, then~$Y_1$ through~$Y_{k+1}$ are all the same weight, which cannot happen if $\overline{X}_1$ includes false coins since $\overline{X}_1$ includes at most $k$ such ones. 

Finally, we adapt our algorithm so that it can satisfy the big-pan property.
We simulate the transformation $|\widetilde{q}\rangle\mapsto(-1)^{\widetilde{q}\cdot x}|\widetilde{q}\rangle$
of the IP oracle by replacing a query string $\widetilde{q}\in\{0,1\}^N$ 
with even Hamming weight $l$ by two queries with Hamming weight $\lfloor N/2\rfloor$ 
when $l/2$ is even (similarly for the case where it is odd). 
We replace $\widetilde{q}$ by two $N$-bit strings $\widetilde{q}_1$ and $\widetilde{q}_2$
with Hamming weight $l/2$ such that $\widetilde{q}=\widetilde{q}_1\oplus \widetilde{q}_2$. 
We take an arbitrary $N$-bit string $\widetilde{b}$ with 
$wt(\widetilde{b})=\lfloor N/2\rfloor-l/2$ such that $I(\widetilde{b})\cap I(\widetilde{q})=\emptyset$.
Note that both $\widetilde{q}_1\oplus \widetilde{b}$ and
$\widetilde{q}_2\oplus\widetilde{b}$ have Hamming weight~$\lfloor N/2\rfloor$. 
(Recall that the Hamming weight of query strings must be even. 
So, if $\lfloor N/2\rfloor$ is odd, then we need an adjustment ($-1$) 
of the Hamming weight when selecting $\widetilde{b}$.)
Since  $(-1)^{(\widetilde{q}_1\oplus \widetilde{b})\cdot x}
(-1)^{(\widetilde{q}_2\oplus \widetilde{b})\cdot x}=(-1)^{\widetilde{q}\cdot x}$
for any $x$, we can replace a query $\widetilde{q}$ to the IP oracle
by two queries $\widetilde{q}_1\oplus \widetilde{b}$ and $\widetilde{q}_2\oplus \widetilde{b}$.
Thus, we can simulate $Find^*(k)$ without changing the complexity (up to a constant factor).

\section{Other Lower Bounds for Restricted Pans}\label{sec:pan_size}

In addition to Theorem \ref{bigpan}, we can show more lower bounds 
for the case where the size of pans is restricted. 
In what follows, $L\leq l$ denotes the restriction that at most $l$ coins 
must be placed on the pans whenever the balance is used. 

First, we give a lower bound for the case where the size of pans is ``small.''
Note that Theorem \ref{smallpan} implies that there is no $o(k^{1/4})$-query algorithm 
placing at most $O(N/\sqrt{k})$ coins on the pans whenever the balance is used.  
 
\begin{theorem}\label{smallpan}
If $L\leq l$, then we need $\Omega(\sqrt{kN/l\min(k,l)})$ weighings. 
In particular, we need $\Omega(\sqrt{N/l})$ weighings. 
\end{theorem}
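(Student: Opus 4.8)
The plan is to invoke the strong weighted adversary method (Lemma~\ref{adv_boracle}) exactly as in Theorems~\ref{bigpan} and~\ref{quasi_lower}, but with the query set restricted to $Q_{\le l}:=\bigcup_{l'\le l}Q_{l'}$ (at most $l$ coins on each pan), and with $S=\{x\in\{0,1\}^N\mid wt(x)=k\}$, $\zeta(x;q)=\chi(x;q)$, $f(x)=x$. For the weight $w$ I would take the distance-$2$ scheme, which contrasts with the all-pairs choice of Theorem~\ref{bigpan} and mirrors Theorem~\ref{quasi_lower}: set $w(x,y)=1$ iff $d(x,y)=2$ (that is, $y$ arises from $x$ by moving one false coin to a fair position) and $w(x,y)=0$ otherwise. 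Then $\mu(x)=\mu(y)=k(N-k)$ for all $x,y$, so by Lemma~\ref{adv_boracle} it remains to choose $w'$ and bound $\nu(x,q)\nu(y,q)$ from above, uniformly over all distinguished pairs. Concretely, the target $\Omega(\sqrt{kN/(l\min(k,l))})$ will follow once I establish $\nu(x,q)\nu(y,q)=O\!\left(kN\,l\min(k,l)\right)$.

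For the counting I would use that, for $q\in Q_{l'}$, the answer depends on $x$ only through the signed overlap $V_x:=(\text{false coins of }x\text{ on the }{+}1\text{ pan})-(\text{false coins on the }{-}1\text{ pan})$, with $\chi(x;q)=[\,V_x\neq0\,]$. A distance-$2$ swap changes $V$ by $0$, $\pm1$, or $\pm2$ according to whether none, exactly one, or both of the two swapped coins lie on the pans, and it flips $\chi$ (hence is counted in $\nu$) only when at least one swapped coin is on a pan. The ``both-on-pan'' swaps number only $O(l\min(k,l))$ per vertex, but the ``exactly-one-on-pan'' swaps are abundant, up to $\Theta(\min(k,l)\cdot N)$ per vertex. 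I would therefore define $w'(x,y,q)$ as an explicit function of the pan-overlaps of $x$ and $y$ (the number of false coins each places on each pan), in the spirit of the asymmetric scheme $w'(m_1,m_2)$ of Theorem~\ref{quasi_lower}: large weight on the rare tilted-to-balanced direction and small weight on the abundant balanced-to-tilted direction, arranged so that $w'(x,y,q)w'(y,x,q)\ge 1=w(x,y)^2$ still holds. A short case analysis over the five values of $V_y-V_x$ should then give the product bound $\nu(x,q)\nu(y,q)=O(kN\,l\min(k,l))$, and Lemma~\ref{adv_boracle} yields $\Omega(\sqrt{kN/(l\min(k,l))})$; since $\min(k,l)\le k$, this is in particular $\Omega(\sqrt{N/l})$.

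The main obstacle is precisely the ``exactly-one-on-pan'' swaps, which encode a severe asymmetry: a small-pan query leaves almost all inputs balanced and only a thin set tilted, so a single tilted input can be distance-$2$ adjacent to $\Theta(lN)$ balanced ones. With the naive choice $w'\equiv1$ one obtains only the trivial $\Omega(k/l)$, because the worst distinguished pair has $\nu(x,q)$ and $\nu(y,q)$ both of order $lN$. Equivalently, in the spectral picture one must bound the norm of the subgraph of the Johnson graph (weight-$k$ strings, edges the distance-$2$ pairs) formed by the pairs distinguished by $q$: the crude maximum-degree bound gives only $O(lN)$, whereas one needs $O(\sqrt{kN\,l\min(k,l)})$. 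The hard part is thus to design $w'$ that witnesses this sharper estimate by exploiting the degree asymmetry, so that the product $\nu(x,q)\nu(y,q)$ drops by the needed factor $\sim N/k$ relative to the naive bound, and to verify the weight-scheme constraints together with the product estimate uniformly over every $q\in Q_{\le l}$, including the boundary cases in which only a few false coins land on the pans.
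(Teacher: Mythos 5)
Your plan coincides with the paper's own proof in Appendix \ref{sec:pan_size}: the paper uses exactly the distance-$2$ weight scheme (so $\mu(x)=k(N-k)$) together with an asymmetric $w'$ written explicitly as a function $w'((m_1,m_2),(m_3,m_4))$ of the numbers of false coins each input places on the two pans, with the reciprocal pair of weights $\frac{m(N-k-(2l-2m))}{(l-m+1)(k-2m+1)}$ and its inverse on the tilted/balanced transitions, which yields precisely your target $\nu(x,q)\nu(y,q)=O(kNl\min(k,l))$ and hence $\Omega(\sqrt{kN/(l\min(k,l))})$. The only content you defer --- the explicit $w'$ and the case analysis over transition types --- is exactly what the paper supplies, and your diagnosis of the degree asymmetry (one-on-pan swaps of order $\min(k,l)N$ versus both-on-pan swaps of order $l\min(k,l)$, making constant weights give only $\Omega(k/l)$) is the correct motivation for that choice.
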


\begin{proof}
For simplicity, the following weight scheme %, which is similar to that of Theorem \ref{quasi_lower}, 
is given when the size of each pan is~$l$ (that is, when $q\in Q_{l}$). 
But the same bound is also obtained similarly when the size is at most $l$, 
and hence we can apply Lemma \ref{adv_boracle} for $Q=\bigcup_{l'\leq l}Q_{l'}$ to obtain the desired bound 
in the last of this proof. Let $S=\{x\in\{0,1\}^N \mid wt(x)=k\}$ and $f(x)=x$. 
For $(x,y)\in S\times S$, let $w(x,y)=1$ if ${d}(x,y)=2$ (where ${d}(x,y)$ denotes 
the Hamming distance between $x$ and $y$) and $0$ otherwise. 
When the query $q$ for $x$ means that $m_1$ and $m_2$ false coins 
are placed on the left and right pans, respectively, 
and $q$ for $y$ means that $m_3$ and $m_4$ false coins 
are placed on the left and right pans, respectively, 
we put the same weight for all $w'(x,y,q)$'s of such triples 
$(x,y,q)$, which is denoted as $w'((m_1,m_2),(m_3,m_4))$.
Then we define
\begin{align*}
& w'((m_1,m_2),(m_3,m_4))\\ 
&=\left\{
\begin{array}{ll}
\frac{m(N-k-(2l-2m))}{(l-m+1)(k-2m+1)} &\mbox{if}\ (m_1,m_2,m_3,m_4)=(m-1,m,m,m),(m,m-1,m,m),\\
\frac{(l-m+1)(k-2m+1)}{m(N-k-(2l-2m))} &\mbox{if}\ (m_1,m_2,m_3,m_4)=(m,m,m-1,m),(m,m,m,m-1),\\
1 &\!\!\!\!\!\!\!\!\!\!\!\!\!\!\!\!\!\!\!\!\!\!\!\! \mbox{if one of $m_i$'s is $m$ and the others are $m-1$, or}\\ %(m_1,m_2,m_3,m_4)=(m,m-1,m-1,m-1),(m-1,m,m-1,m-1),(m-1,m-1,m,m-1),(m-1,m-1,m-1,m),\\
  &\!\!\!\!\!\!\!\!\!\!\!\!\!\!\!\!\!\!\!\!\!\!\!\! (m_1,m_2,m_3,m_4)=(m+1,m-1,m,m),(m-1,m+1,m,m),\\
  &\!\!\!\!\!\!\!\!\!\!\!\!\!\!\!\!\!\!\!\!\!\!\!\! (m,m,m+1,m-1),(m,m,m-1,m+1),\\
0 &\!\!\!\!\!\!\!\!\!\!\!\!\!\!\!\!\!\!\!\!\!\!\!\! \mbox{otherwise},  
\end{array}
\right.
\end{align*}
where $1\leq m\leq \min(k/2,l)$. It is easy to see that the condition of a weight scheme is satisfied.
Notice that for any $x\in S$ we have $\mu(x)=k(N-k)$. 
Evaluating $\nu(x,q)$ is a bit complicated.
Since this value depends on the numbers of false coins on 
the two pans, $m_1$ and $m_2$, represented by the pair $(x,q)$,
we denote it by $\nu(m_1,m_2)$. We want to evaluate $\nu(x,q)\nu(y,q)$ 
such that $w(x,y)>0$, i.e., ${d}(x,y)=2$ and $\chi(x;q)\neq\chi(y;q)$.
By symmetry, we can assume that $\chi(x;q)=1$ and $\chi(y;q)=0$.
Since ${d}(x,y)=2$, we need to consider only the following cases:
(i) $\nu(x,q)=\nu(m,m-1)$ (or $=\nu(m-1,m)$) and $\nu(y,q)=\nu(m,m)$ (where $0< m \leq \min(k/2,l)$); 
(ii) $\nu(x,q)=\nu(m+1,m-1)$ (or $=\nu(m-1,m+1)$) and $\nu(y,q)=\nu(m,m)$ (where $0< m <\min(k/2,l)$); 
(iii) $\nu(x,q)=\nu(m+1,m)$ (or $=\nu(m,m+1)$) and $\nu(y,q)=\nu(m,m)$ (where $0\leq m <\min(k/2,l)$).
In case of~(i), 
\begin{align*}
\nu(x,q) &= \sum_{y:{d}(x,y)=2,\ \chi(y;q)=0} w'(x,y,q)\\
&= w'((m,m-1),(m-1,m-1))\times m(N-k-(2l-(2m-1)))\\
& \ \ +w'((m,m-1),(m,m))\times (l-(m-1))(k-(2m-1))\\
&\leq 2m(N-k-2l+2m)\\
&= O(\min(k,l)N),
\end{align*}
and
\begin{align*}
\nu(y,q) &= \sum_{x:{d}(x,y)=2,\ \chi(x;q)=1} w'(y,x,q)\\
&= (w'((m,m),(m+1,m))+w'((m,m),(m,m+1))) (l-m)(k-2m)\\
&+ (w'((m,m),(m,m-1))+w'((m,m),(m-1,m))) m(N-k-(2l-2m))\\
&+ (w'((m,m),(m+1,m-1))+w'((m,m),(m-1,m+1))) m(l-m)\\
&= 2(l-m)(k-m)+2(l-m+1)(k-2m+1)\\
&= O(kl),
\end{align*}
and hence $\nu(x,q)\nu(y,q)=O(kNl\min(k,l))$. Similarly, in case of (iii), it holds 
that $\nu(x,q)\nu(y,q)=O(kNl\min(k,l))$.
In case of (ii), 
\begin{align*}
\nu(x,q) &= w'((m+1,m-1),(m,m))\times(l-(m-1))(m+1)\\
&= (l-m+1)(m+1)=O(\min(k/2,l)l)=O(\min(k,l)N),
\end{align*}
and $\nu(y,q)=O(kl)$, and hence we also have $\nu(x,q)\nu(y,q)=O(kNl\min(k,l))$.
From the above, by Lemma \ref{adv_boracle} the quantum query complexity is at least
\[
\Omega\left(
\min_{\substack{x,y,q:\ w(x,y)>0,\\ \chi(x;q)\neq \chi(y;q) }}
\sqrt{ \frac{\mu(x)\mu(y)}{\nu(x,q)\nu(y,q)} }
\right)
%=\Omega\left(\sqrt{\frac{kN\times kN}{kNl\min(k,l)}}\right)
=\Omega\left(\sqrt{\frac{kN}{l\min(k,l)}}\right).
\]
This completes the proof. %\hfill$\Box$
\end{proof}

Second, we generalize Theorem \ref{bigpan} to the case where 
``big pans'' and ``small pans'' are both available but ``medium pans'' 
are not. Here, ``$L\leq l_1$ or $L\geq l_2$'' means that 
at most $l_1$ coins or at least $l_2$ coins (or their superposition) 
must be placed on the pans whenever the balance is used. 

\begin{theorem}\label{bigpan_generalized}\sloppy
If $L\leq l_1$ or $L\geq l_2$ where $l_1<l_2$, 
then we need $\Omega(\min((N/l_1k)^{1/2},(l_2k/N)^{1/4}))$ weighings.
In particular, for any $\epsilon\geq 0$, if $L\leq N/k^{1+2\epsilon}$ 
or $L\geq N/k^{1-4\epsilon}$, then we need $\Omega(k^{\epsilon})$ weighings.
\end{theorem}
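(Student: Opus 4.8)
The plan is to run the strong weighted adversary bound of Lemma~\ref{adv_boracle} with \emph{exactly the same weight scheme as in Theorem~\ref{bigpan}} (the ``all-pairs'' scheme $w(x,y)=1$ for all $x\neq y$, and $w'(x,y,q)=1$ whenever $\chi(x;q)\neq\chi(y;q)$), but now over the restricted query set $Q=Q_{\leq l_1}\cup Q_{\geq l_2}$. As there, $\mu(x)=\binom{N}{k}-1$, and for a query placing $l$ coins on each pan (so $c=N/l$) every critical pair is a balanced/tilted pair, whose product is $\nu(x,q)\nu(y,q)=\gamma(N,k,c)\bigl(\binom{N}{k}-\gamma(N,k,c)\bigr)$; hence the per-query contribution is $\sqrt{\binom{N}{k}^2/\bigl(\gamma(\binom{N}{k}-\gamma)\bigr)}$. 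The key observation driving the whole proof is that, as a function of $l$, this quantity is \emph{U-shaped}: for small pans the scale is almost always balanced (so $\binom{N}{k}-\gamma$ is tiny), for big pans it is almost always tilted (so $\gamma$ is tiny), while it is $\Theta(1)$ near the medium value $l\approx N/k$. Forbidding medium pans removes precisely the bottom of this U, so the minimum over $Q$ is attained at the two endpoints $l=l_1$ and $l=l_2$.

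For the small-pan side ($l\le l_1$) I would bound the number of tilted strings by a union bound: a query touches $2l$ coins, so a uniformly random weight-$k$ string places a false coin on a pan with probability at most $2kl/N$, and tilting requires at least one false coin on the pans; therefore $\binom{N}{k}-\gamma\le (2kl/N)\binom{N}{k}$. Together with the trivial $\gamma\le\binom{N}{k}$ this gives a per-query bound $\Omega(\sqrt{N/(kl)})$, which is decreasing in $l$ and hence minimized at $l=l_1$, yielding $\Omega((N/(l_1 k))^{1/2})$.

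For the big-pan side ($l\ge l_2$) I would simply invoke Lemma~\ref{balance_sum}, which already gives $\gamma(N,k,c)/\binom{N}{k}=O(\sqrt{c/k})=O(\sqrt{N/(lk)})$. Thus $\gamma(\binom{N}{k}-\gamma)\le\gamma\binom{N}{k}=O(\sqrt{N/(lk)})\binom{N}{k}^2$ and the per-query bound is $\Omega((lk/N)^{1/4})$, exactly as in Theorem~\ref{bigpan}; being increasing in $l$, it is minimized at $l=l_2$, giving $\Omega((l_2 k/N)^{1/4})$. Since Lemma~\ref{adv_boracle} takes the minimum over all $q\in Q$, combining the two sides yields $\Omega(\min((N/(l_1 k))^{1/2},(l_2 k/N)^{1/4}))$, and the ``in particular'' statement follows by substituting $l_1=N/k^{1+2\epsilon}$ and $l_2=N/k^{1-4\epsilon}$, for which both terms evaluate to $k^{\epsilon}$.

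The main thing to get right is not a single hard estimate but the \emph{consistency of the two regimes}: I must check that the per-query bound is genuinely monotone on each side (so the extrema sit at $l_1$ and $l_2$ rather than in the interior), and that the parameter windows of the two estimates are compatible. Concretely, Lemma~\ref{balance_sum} is stated for $2\le c=N/l\le k/3$, i.e.\ for $l\gtrsim N/k$, and the small-pan union bound is informative only for $l\lesssim N/k$; whenever the corresponding term of the claimed bound is $\omega(1)$ these are exactly the relevant ranges (and when a term is $O(1)$ the bound on that side is vacuous), so the argument goes through for every $l_1<l_2$. The one subtle point is that we are forced to use a single weight scheme on all of $Q$: the all-pairs scheme is weaker than the Hamming-distance-$2$ scheme of Theorem~\ref{smallpan} on small pans, which is precisely why the small-pan exponent here is $1/2$ on $N/(l_1k)$ rather than the stronger bound of Theorem~\ref{smallpan}, but it is the only scheme that also controls the big-pan side.
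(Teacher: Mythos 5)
Your proposal follows the paper's proof essentially exactly: the same all-pairs weight scheme from Theorem~\ref{bigpan} restricted to $Q_{\leq l_1}\cup Q_{\geq l_2}$, Lemma~\ref{balance_sum} on the big-pan side, and the estimate $\bigl(\binom{N}{k}-\gamma(N,k,c)\bigr)/\binom{N}{k}=O(kl/N)$ on the small-pan side, which is precisely the paper's Lemma~\ref{balance_sum2}. The only (minor, and valid) deviation is that you obtain that small-pan estimate by a Markov/union bound on the expected number of false coins among the $2l$ coins placed on the pans, rather than the paper's geometric-series computation via the ratio $t'(m+1)/t'(m)$; both give the same bound, and your handling of the validity ranges of the two regimes matches the paper's remark that each side is only needed where its bound is $\omega(1)$.
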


\begin{proof}
We can use the same weight scheme as the proof of Theorem \ref{bigpan}. 
Let $l_1=N/d_1$ and $l_2=N/d_2$ with $d_1>d_2$. 
The lower bound we should show is $\Omega(\min((d_1/k)^{1/2},(k/d_2)^{1/4}))$. 
Similar to the proof of Theorem \ref{bigpan}, we can show that by Lemma \ref{adv_boracle} 
the quantum query complexity is at least
\begin{equation}\label{eq091024-2}
\Omega\left(
\min_{\substack{ x,y,q\\ w(x,y)>0\\ \chi(x;q)\neq \chi(y;q) }}
\!\!\!\!\!\!\!\sqrt{ \frac{\mu(x)\mu(y)}{\nu(x,q)\nu(y,q)} }
\right)
=\Omega\left(
\min_{\substack{ c\\ c\geq d_1\\ \mbox{{\scriptsize or}}\ \leq d_2}}
\!\!\!\sqrt{\frac{\binom{N}{k}}{\gamma(N,k,c)}\cdot \frac{\binom{N}{k}}{\binom{N}{k}-\gamma(N,k,c)}}
\right).
\end{equation}
Then the theorem can be obtained from Eq.(\ref{eq091024-2}) 
by using Lemma \ref{balance_sum} for $c\leq d_2$ and the following lemma 
(Lemma \ref{balance_sum2}) for $c\geq d_1$. (Notice that it suffices to show Lemma \ref{balance_sum2} for $c\geq 3$ 
since the bound we should obtain from Lemma \ref{balance_sum2}, 
$(d_1/k)^{1/2}$, is nontrivial only if $d_1=\omega(k)$ 
and hence the size of pans $N/c\ (\leq l_1)$ should be considered only for $c=\omega(k)$).

\begin{lemma}\label{balance_sum2}
$(\binom{N}{k}-\gamma(N,k,c))/\binom{N}{k} =O(\frac{k}{c})$ 
for any $c\geq 3$. %satisfying $k^{3/2} \le c \le N/2$.
\end{lemma}

\begin{proof}
Let us bound the probability that the scale is tilted when $N/c$ coins 
($N$ coins include $k$ false ones) are randomly placed on each of the two pans 
since it is exactly $(\binom{N}{k}-\gamma(N,k,c))/\binom{N}{k}$. 
%Note that the scale is tilted when there are odd number of false coins placed on the two pans. 
Clearly, this probability is upper bounded by the sum $\sum_{m=1}^k t'(m)$ 
where $t'(m):=\frac{\binom{k}{m}\binom{N-k}{2N/c-m}}{\binom{N}{2N/c}}$ denotes 
the probability of choosing exactly $m$ false coins out of $k$ ones 
when $2N/c$ coins are placed on the pans. 
Letting $r'(m) := t'(m+1)/t'(m) = \frac{(k-m)(2N/c-m)}{(m+1)(N-k-2N/c+m+1)}$, the sum is bounded by
\begin{align*}
\sum_{m=1}^k t'(m) 
&\leq (r'(0)+r'(0)^2+\cdots)t'(0)\ \ \ \mbox{(since $r'(0) \ge r'(m)$ for all $m \ge 1$)}\\
&\leq \frac{r'(0)}{1-r'(0)}\ \ \ \mbox{(by $t'(0)\leq 1$)} \\ 
&= O(r'(0)).
\end{align*}
Since $c\geq 3$ and $k\leq N/2$, we can see that the following holds: %for $ k^{3/2} \le c \le N/2$.
$$
r'(0) \le \frac{2kN}{cN-ck-2N} = \frac{2k}{c} \cdot 
 \frac{1}{1-\frac{k}{N}-\frac{2}{c}}=  O(k/c).
$$
This completes the proof. %\hfill$\Box$
\end{proof}
Hence the proof of Theorem \ref{bigpan_generalized} is completed. %\hfill$\Box$
\end{proof}

Unfortunately, Theorem \ref{bigpan_generalized} does not give even a weakest 
nontrivial lower bound $\omega(1)$ if the size of the pans is not restricted. 
One might have the hope by Theorem~\ref{bigpan_generalized} 
that we could obtain a good upper bound by always placing approximately $N/k$ coins on the pans,
but Theorem \ref{smallpan} denies such a hope since 
we have an $\Omega(k^{1/2-2\epsilon})$ lower bound for $l=N/k^{1-4\epsilon}$.


\begin{thebibliography}{}
%\setlength{\itemsep}{-0.1mm}
 
\bibitem{Amb02}
A. Ambainis. Quantum lower bounds by quantum arguments.
{\em J.~Comput.~Syst.~Sci.} {\bf 64} (2002) 750--767.
 
\bibitem{Amb06}
A.~Ambainis: Polynomial degree vs. quantum query complexity.
{\em J.~Comput.~Syst.~Sci.} {\bf 72} (2006) 220--238.
 
\bibitem{BSS03}
H.~Barnum, M.~E.~Saks, M.~Szegedy.
Quantum query complexity and semi-definite programming.
In {\em Proc.~18th CCC}, pp.~179--193, 2003.

\ignore{ \bibitem{BBC98} 
R.~Beals, H.~Buhrman, R.~Cleve, M.~Mosca, R.~de Wolf. 
Quantum lower bounds by polynomials. 
{\em J.~ACM} {\bf 48} (2001) 778--797. }
 
\bibitem{BV97} E.~Bernstein and U.~Vazirani.
Quantum complexity theory. {\em SIAM J.~Comput.} {\bf 26} (1997) 1411--1473.
 
\bibitem{BBHT98} M.~Boyer, G.~Brassard, P.~H{\o}yer and A.~Tapp.
Tight bounds on quantum searching. {\em Fortschritte Der Physik} {\bf 46} (1998) 493--505.
 
\bibitem{BHMT02} G.~Brassard, P.~H{\o}yer, M.~Mosca and A.~Tapp.
Quantum amplitude amplification and estimation. In {\em Quantum Computation and Quantum Information:
A Millennium Volume}, AMS Contemporary Mathematics Series, vol.~305, pp.~53--74, 2002.

\bibitem{DS08}
W.~van~Dam and I.~Shparlinski. 
Classical and quantum algorithms for exponential congruences. 
In {\em Proc.~3rd TQC, Lecture Notes in Comput.~Sci.} {\bf 5106} (2008) 1--10. 
 
\bibitem{Gro96}
L.~K.~Grover. A fast quantum mechanical algorithm for database search.
In {\em Proc.~28th STOC}, pp.~212--219, 1996.
 
\bibitem{GN95}
R.~K.~Guy and R.~J.~Nowakowski. Coin-weighing problems.
{\em Amer.~Math.~Monthly} {\bf 102} (1995) 164--167.
 
\bibitem{HH95}
L.~Halbeisen and N.~Hungerb{\"u}hler.
The general counterfeit coin problem.
{\em Discrete Mathematics} {\bf 147} (1995) 139--150.
 
\bibitem{HLS07}
P.~H{\o}yer, T.~Lee and R.~{\v S}palek.
Negative weights make adversaries stronger.
In {\em Proc.~39th STOC}, pp.~526--535, 2007.
 
\bibitem{LM08}
S.~Laplante and F.~Magniez.
Lower bounds for randomized and quantum query complexity using Kolmogorov arguments.
{\em SIAM J.~Comput.} {\bf 38} (2008) 46--62.
 
\bibitem{LZN05}
W.~A.~Liu, W.~G.~Zhang and Z.~K.~Nie.
Searching for two counterfeit coins with two-arms balance.
{\em Discrete Appl.~Math.} {\bf 152} (2005) 187--212.
 
\bibitem{FSS07}
F.~Magniez, M.~Santha and M.~Szegedy.
Quantum algorithms for the triangle problem.
{\em SIAM J.~Comput.} {\bf 37} (2007) 413--424.
%In {\em Proc. ACM-SIAM SODA}, pages 1109--1117, 2005.
 
\bibitem{Man77}
B.~Manvel. Counterfeit coin problems.
{\em Mathematics Magazine} {\bf 50} (1977) 90--92.

\bibitem{Rei09}
B.~Reichardt. Span programs and quantum query complexity: The general adversary bound
is nearly tight for every boolean function. In {\em Proc.~50th FOCS}, pp.~544--551, 2009. 
%Also, quant-ph/0904.2759.

\bibitem{Rei10}
B.~Reichardt. Reflections for quantum query algorithms. arXiv:1005.1601, 2010.
 
\bibitem{SS06}
R.~{\v S}palek and M.~Szegedy. All quantum adversary methods are equivalent.
{\em Theory of Computing} {\bf 2} (2006) 1--18.
 
\bibitem{TS98} B.~M.~Terhal and J.~A.~Smolin. Single quantum querying of
a database. {\em Phys. Rev. A} {\bf 58} (1998) 1822--1826.
  
\bibitem{Zha04}
S. Zhang. On the power of Ambainis's lower bounds.
{\em Theoret.~Comput.~Sci.} {\bf 339} (2005) 241--256.
\end{thebibliography}
\end{document}